\numberwithin{equation}{section}   
\numberwithin{equation}{section}
\title{Regret Analysis of Repeated Delegated Choice\footnote{Alphabetical author ordering. Published in AAAI 2024.}}
  \newcommand{\country}[1]{#1.}
  \newcommand{\city}[1]{#1}
  \newcommand{\institution}[1]{#1}
  \newcommand{\email}[1]{Email: \texttt{#1}}
  \newcommand{\affiliation}{\thanks}
\author{
  {MohammadTaghi Hajiaghayi
 \affiliation{
   \institution{University of Maryland}
   \city{College Park, MD}
   \country{USA}
 \email{\{hajiagha,mahdavi,krezaei,suhoshin\}@umd.edu}
 }}
 \and
 Mohammad Mahdavi\footnotemark[1]
 \and
 Keivan Rezaei\footnotemark[1]
 \and
 Suho Shin\footnotemark[1]
}
\begin{document}

\maketitle

\begin{abstract}
    We present a study on a repeated delegated choice problem, which is the first to consider an online learning variant of Kleinberg and Kleinberg, EC'18. 
    In this model, a principal interacts repeatedly with an agent who possesses an exogenous set of solutions to search for efficient ones. 
    Each solution can yield varying utility for both the principal and the agent, and the agent may propose a solution to maximize its own utility in a selfish manner.
    To mitigate this behavior, the principal announces an eligible set which screens out a certain set of solutions.
    The principal, however, does not have any information on the distribution of solutions in advance.
    Therefore, the principal dynamically announces various eligible sets to efficiently learn the distribution.
    The principal's objective is to minimize cumulative regret compared to the optimal eligible set in hindsight.
    We explore two dimensions of the problem setup, whether the agent behaves myopically or strategizes across the rounds, and whether the solutions yield deterministic or stochastic utility. 
    Our analysis mainly characterizes some regimes under which the principal can recover the sublinear regret, thereby shedding light on the rise and fall of the repeated delegation procedure in various regimes.
\end{abstract}

\section{Introduction}
Delegation is perhaps one of the most frequent economic interactions one may see around in real life \citep{holmstrom1980theory, bendor2001theories, amador2013theory}.
Abstractly speaking, consider a principal with less information who tries to find an optimal solution from an agent with expertise, but there's an information asymmetry such that she\footnote{Feminine pronouns (masculine) hereafter denote the principal (agent).} \emph{cannot directly access} the solutions that the agent possesses \citep{alonso2008optimal, kleinberg2018delegated,kleiner2022optimal, hajiaghayi2023multi}.
Instead, she requires the agent to propose a set of solutions and then commits to the final one among them.
The principal and the agent, however, may have \emph{misaligned utility} for the solution selected, and thus the agent may propose a solution in a selfish manner.
To cope with it, the principal announces a set of eligible solutions before the agent proposes, and only accepts the eligible solution.


To provide a concrete example, consider (online) labor market or crowdsourcing platform such as Upwork.
We have a task requester (principal) who regularly visits the platform (agent) and tries to solve a series of tasks.
The platform has a pool of workers (solutions).
At each time the requester visits, the platform recommends some set of workers, and the requester selects a single worker to commit to the task.
Obviously, the task requester wants to hire a qualified worker.
The platform, on the other hand, aims to maximize its long-term revenue by recommending workers who solve tasks quickly, even if their quality is not high, allowing them to be assigned to other tasks promptly.
This misalignment of utility may lead to the platform strategically recommending unqualified workers. To mitigate this, the requester sets restrictions, such as requiring certificates in specific areas like a foreign language or web development, when requesting worker recommendations. We refer to Appendix A in the full paper for more examples on motivations.

If the task requester is fully aware of the set of workers that the platform has, then she can directly impose a strong restriction to make the platform recommend the specific workers she wants.
In practice, however, such information is not feasible priorly, instead, the requester needs to \emph{learn the distribution} of existing workers in the repeated interaction.
The fundamental question here is, how the requester should dynamically determine which sort of restriction to impart at each round, in order to maximize cumulative utility over the set of tasks.
Furthermore, one may ask what happens if the platform also tries to \emph{strategize across the rounds} to deceive the requester, and what if the quality of each worker is not fixed in advance, but rather is given from a \emph{latent distribution}. 
This work introduces the \emph{repeated delegated choice} problem, which focuses on how the principal can design an efficient delegation mechanism. To the best of our knowledge, this is the first study to explore an online learning extension of the delegated choice problem presented by \cite{armstrong2010model,kleinberg2018delegated,hajiaghayi2023multi}. In our model, the principal lacks initial information about the solutions' distribution. Instead, through repeated announcements of eligible sets that may screen out some solutions, the principal aims to learn the solutions' distributions in a sample-efficient manner. The principal aims to minimize cumulative regret compared to the optimal eligible set in hindsight.


\begin{table}
\centering
\begin{tabular}{c|c|c}
\diagbox{Utility}{Behavior} & Myopic                           & Strategic                                                                                                                     \\ \hline 
Deterministic      & Theorem~\ref{thm:deter_myop} &  
Theorem \ref{thm:alg2_regret},\ref{thm:lips-lns-reg}\\
\hline
Stochastic         & Theorem~\ref{thm:stoc_myop} & Theorem~\ref{thm:stoc_stra} \\ 

\hline
\end{tabular}
\caption{Summary of our results under different settings.}
\label{tab:summary}
\end{table}



We distill the problem into two dimensions of whether the utility of each solution is deterministic or stochastic, and whether the agent strategizes across the rounds or not, and provide a comprehensive regret analysis for each setting.
In the myopic agent setting, the agent plays a best-response to the eligible set at each round, \ie a strategy which maximizes \emph{myopic utility} without regard to the future utility and resulting behavior of the mechanism.\footnote{This model of myopic agent accommodates a perspective of "multiple agents" setting in which at each round an agent having the same type of solutions arrive and interacts with the principal. In this viewpoint, the agents are bound to be myopic due to the single round interaction per agent.}
Hence, the principal's objective boils down to efficiently learning the distribution of utilities by selecting proper eligible sets at each round, while only observing the partial feedback from the choice of eligible set, \ie  which solution the agent submits (or possibly declines to submit any).
This challenge intensifies with a strategic agent, as the agent may intentionally hide solutions or deviate from their best response for greater utility in later rounds. Consequently, the feedback is not guaranteed to be stochastic across rounds, making the analysis more complex.

\paragraph{Our contributions}
We here provide a summary of our contributions and techniques.
All the proofs can be found in the appendix in the full paper.
The results are summarized in Table~\ref{tab:summary}.
First, we observe a revelation-principle-style-of-result such that it suffices to focus on a class of so-called single-proposal mechanism, formally defined in Definition~\ref{def:single}.
Interestingly, we show that the myopic deterministic setting can be reduced to the repeated posted price mechanism (RPPM) with myopic buyer.
\footnote{Overall, we observe an intimate connection between the RPPM and our problem under certain settings. In general, however, our problem spawns additional challenges of having multiple latent random variables and the principal is even unaware of the number of potential solutions. 
We provide more detailed discussion in Appendix B in the full paper.
}
Denoting the principal's utility for each solution by $X_i$, one can effectively construct an instance of RPPM by converting $X_i$ to the buyer's value $v = \max_i X_i$ in RPPM. In both problems, the optimal benchmark is to obtain $\max_i X_i$, and the reduction follows.
Combined further with an iterative algorithm, we obtain a regret upper bound of $O(\min(K, \log \log T))$, where $K$ denotes the number of solutions and $T$ is the time horizon.

With stochastic valuation, however, this does not work since the benchmark in RPPM is to put an ex-ante best fixed price, which does not coincide $\max_i X_i$.
Indeed, we observe that the optimal benchmarks cannot be reduced from one to another in general.
Instead, we mainly reduce our problem to a stochastic multi-armed bandit problem via proper discretization over the space of eligible sets equipped with a variant of analysis by~\cite{kleinberg2003posted}, and obtain a regret of $O(\sqrt{T \log T})$ under the same assumption imposed in~\cite{kleinberg2003posted}.


For the strategic agent with deterministic utility, we first observe that it is necessary to impose a certain assumption on the agent's utility sequence to obtain positive results.
Precisely, the agent with non-discounting utility can strategize so that no algorithm can obtain sublinear regret, where the formal proof is presented in Appendix K.
In this context, to capture both of the practicality and theoretical tractability, we consider $\gamma$-discounting strategic agent whose utility is discounted by a multiplicative factor of $\gamma$ at each round.
We also note that this is common in the literature~\cite{amin2013learning,haghtalab2022learning}.

Given that, for the $\gamma$-discounting agent with deterministic utility, we first consider a case in which the agent's utility is uniformly bounded below by $\ymin$ and the principal is aware of it.
In this setting, by exploiting the delay technique of~\cite{haghtalab2022learning}, we obtain a regret bound of $O(K\tgamma \log \frac{\tgamma}{\ymin})$ where $\tgamma = 1/(1-\gamma)$.
The dependence on $K$ can be replaced by $\log T$ by shrinking the eligible set more in an aggressive manner, thereby obtaining a regret of $O(\tgamma \log \frac{\tgamma}{\ymin} + \log T)$.
Note that these bounds yield sublinear regret only if $\ymin = e^{-o(T)}$.
We complement these results by showing that any algorithm suffers regret of $\Omega(T)$ if $\ymin \le e^{-T}$.

On the other hand, if the agent's minimum utility is not known or unbounded, there's no guarantee that the agent behaves myopically for any delay that is imposed in the algorithm.
Instead, under minor assumptions that the solutions are densely spread with respect to parameter $d$ and Lipschitzness between the principal's and agent's utilities, we obtain an efficient algorithm that achieves a regret upper bound of $O(\tgamma \log \frac{\tgamma}{\alpha} + \log \frac{1}{d} + dT)$, where $\alpha$ is a function of the Lipschitz parameters.
The linear dependence of $O(dT)$ regret may look a lot at first glance, we observe this is inevitable for any algorithm, thereby justifying our assumption. 

In the stochastic setting with $\gamma$-discounting strategic agent, we reuse the machinery by \cite{haghtalab2022learning, lancewicki2021stochastic}, and obtain a regret of $O(\sqrt{T \log T})$.
More specifically, we can view the proposed solution as a perturbed output of a stochastic bandit, where the perturbation comes from the agent’s strategic behavior. 
The technical subtlety lies on how we should upper/lower bound such perturbed output to properly apply \cite{lancewicki2021stochastic}, i.e., how we should construct a random perturbation interval.

\subsection{Related Works}
\paragraph{Delegation}
Dating back to the seminal work of  \cite{holmstrom1980theory}, a number of literature from the economics community study the theory of delegation, mostly within the extent of characterizing the regimes under which some simple mechanisms reach the optimal solution \cite{alonso2008optimal, armstrong2010model, kleiner2022optimal}.
Recently, \cite{kleinberg2018delegated} study a problem of \emph{delegated choice}\footnote{They consider two types of problem settings, one of which is delegated search with sampling costs, and the other is delegated choice, referring back to \cite{armstrong2010model}.
Since we also assume that the solutions of the agent are exogenous to mechanisms, we frame our model as a delegated choice problem.} with a lens of computer science, and show that there exists a mechanism with $2$-approximation compared to the case in which the principal can fully access all the solutions in advance, based on a novel connection to prophet inequality problem~\citep{samuel1984comparison}.
Their result, however, depends on the assumption that the principal knows the distribution from which the utility of each solution is drawn, \ie they study the efficiency of Bayesian mechanism.
Interestingly, if the principal has no such information at all, \ie prior-independent mechanism, the result becomes largely pessimistic, \ie there exists a problem instance in which the principal's approximation becomes arbitrarily bad.
\cite{hajiaghayi2023multi} reveal that prior-independent mechanisms can be made efficient with multiple agents, but this does not hold with a single agent.

\paragraph{Repeated delegation}
\cite{lipnowski2020repeated} study a problem of infinitely repeated delegation, however, their model of delegated choice is largely different from ours.
Mainly, their model considers aligned utility but when the principal bears the cost of adopting a project.
Their objective is to persuade the agent to adopt the project when it is truly good, whereas the agent tries to always adopt the project.
Several lines of work \cite{li2017power, guo2021dynamic} study a repeated game of project choice, but we do not discuss it in details due to significant differences from our model.
A line of work~\cite{lewis2012theory,xiao2022information} study a delegated search problem, especially a dynamic version by \cite{rahmani2016dynamics}, but the players bear the cost of search for solutions in their model, whereas the solutions are exogenous to the mechanism in our model.

\paragraph{Stackelberg games}
Our problem can be viewed as an online learning version of repeated Stackelberg game \cite{stackelberg1934marktform,marecki2012playing,bai2021sample,lauffer2022no,zhao2023online}.
A common objective in this area of work is to minimize a Stackelberg regret, \ie difference to the optimal policy that knows the leader's optimal action in hindsight, and the above works aim to minimize the cumulative Stackelberg regret of a leader, assuming that a follower best responds at each round.
Especially, our model of strategic agent belongs to the growing area of learning in games with strategic agent
\cite{birmpas2020optimally,haghtalab2022learning,zhao2023online}.
More precisely, \cite{birmpas2020optimally} study how the follower can efficiently deceive the leader by misreporting his valuation.
\cite{haghtalab2022learning} proposes a generic delaying technique to deal with a strategic agent, and proposes several applications to strategic classification, repeated posted price mechanism (henceforth RPPM), and Stackelberg security game.
Indeed, our model resembles RPPM of \cite{kleinberg2003posted,amin2013learning,babaioff2017posting}.
However, RPPM restricts the buyer and the seller's utility to be linearly negatively correlated, but our model accommodates any kind of correlation.
In addition, our agent has multiple solutions to choose from compared to only accept/reject of RPPM, and thus is technically more challenging to predict/analyze the agent's strategic behavior.


\section{Problem Setup}\label{sec:model}
In a \emph{repeated delegated choice} problem, there is a principal and an agent.
The agent is equipped with a set of solutions $A = \{a_0, a_1, \ldots, a_K\}$ where $K$ denotes the cardinality of the set of possible solutions\footnote{We do not restrict the number of solutions to be finite, or constant with respect to T.}, and $a_0$ denotes the null solution $\perp$ which means that the agent submits nothing.
At each round $t \in [T]$, solution $a$ incurs a nonnegative random utility for the principal and the agents.
Denote by $X_a^{(t)}$ the utility random variable (r.v.) of the principal selecting the solution $a$ and $Y^{(t)}_{a}$ the random utility of the agent given solution $a$, both of which has support in $\Omega := [0,1]$. 
The random vector $(X_a^{(t)}, Y_a^{(t)})$ is independent and identically distributed (i.i.d.) for $t \in [T]$.
Importantly, the agent can \emph{access} the ex-post utility of all the solutions $a \in A$ at each round, but the principal cannot.
The agent is equipped with a discounting factor $\gamma \in (0,1)$, \ie he discounts the utility at round $t$ by a factor of $\gamma^{t-1}$.
That is, the agent's true utility for solution $a$  at round $t$ is $\gamma^{t-1}Y_a^{(t)}$. 
This assumption on agent regret is common in studies concerning strategic agents \cite{amin2013learning,haghtalab2022learning}. It is shown in \cite{amin2013learning} that in the repeated posted-price mechanism problem, a sublinear regret can not be achieved for a non-discounting strategic agent. This is also the case in our problem, as a non-discounting agent might have the incentive to hide a solution that is worse than another solution in terms of agent utility but better for the principal. 
We further explore this in Appendix K.
Define $\tgamma = 1/(1-\gamma)$.
Given a mechanism $M$, at each round $t \in [T]$, the agent chooses a (possibly random) subset of solutions $S^{(t)} \subset A$, and submits them to the principal.
We write $2^X$ to denote the power set of a set $X$, and $\Delta(X)$ for a simplex over $X$.
Thus, the agent's action belongs to $S^{(t)} \in \Delta(2^{A})$.

\paragraph{History, mechanism, and agent's policy}
At each round $t \in [T]$, the mechanism determines which solutions to commit given the agent's action $S^{(t)}$.
This choice is based on the history available up to round $t$, formally defined by $H_t := \cup_{l=1}^{t-1}(S^{(l)},a^{(l)})$, where $a^{(l)}$ denotes the solution selected at round $l$.
We define $\cH:= \cup_{t \ge 1} (2^A, A)$ to be the set of all possible histories of the game, \ie each $H_t$ is a subset of $\cH$.
Formally, the mechanism $M:\cH \times 2^A \mapsto A$ specifies which solutions to select at each round $t$ given the history $H_t \in \cH$ and the agent's submission.
Importantly, the mechanism is only able to choose an action among the actually submitted solutions by the agent.
Also, the principal \emph{commits} to a mechanism before the game starts.

Let $\cM$ be the set of all possible mechanisms.
Correspondingly, the agent's policy $P:\cH \times \cM \mapsto \Delta(2^A)$ is a function that takes the mechanism announced by the principal and the history sequence $H_t$ and decides an (possibly randomized) action.
Let $\cP$ be the set of all possible agent policies.
We write $s_t$ to denote the solution eventually selected at round $t$ by the mechanism.
Note that $s_t$ can be null, \ie $s_t = \perp$, if the principal declines to accept any proposed solution.
In this case, both the agent's and the principal's utilities are zero.
We write $X_{M,P}^{(t)}$ and $Y_{M,P}^{(t)}$ to denote the principal's and agent's utility at round $t$ under $M$ and $P$.
Define 
$\Phi_{M,P} = \mathbb{E}[X_{M,P}^{(t)}]$ and  $\Psi_{M,P} = \mathbb{E}[Y_{M,P}^{(t)}]$
to denote the expected utility of the principal and the agent, respectively.

\paragraph{Mechanism description}
Overall, the interaction between the principal and the agent proceeds as follows:
\begin{enumerate}[i]
    \item The principal commits to a mechanism.
    \item At each round, agent observes the realized solutions and their utility.
    \item Agent (possibly strategically) proposes solutions.
    \item Principal observes the proposed solutions and corresponding utility, and determines the final outcome with respect to the committed mechanism.
    \item Steps ii-iv are repeated.
\end{enumerate}

\paragraph{Single-proposal mechanism}
We mainly deal with the following specific type of mechanism, inspired by \cite{kleinberg2018delegated}.
\begin{definition}[Single-proposal mechanism]\label{def:single}
	In a single proposal mechanism $M$, at each round $t$, the principal announces an eligible set $E^{(t)} \subset \Omega^2$, and the agent submits only a single solution $a$.
	If $(X_a^{(t)}, Y_a^{(t)}) \in E^{(t)}$, then the principal accepts the solution, otherwise, she selects nothing.
\end{definition}
We further say that a mechanism is threshold-based, if its eligible set only puts a (possibly strict) lower bound on the principal's utility. We define $E_\tau = \{a: X_a \ge \tau\}$ and $E_\tau^{>} = \{a: X_a > \tau\}$ to represent threshold-based eligible sets for a threshold $\tau$.
Given a single proposal mechanism, we write $x_a^{(t)}$ and $y_a^{(t)}$ to denote the eventual utility of the principal and the agent at round $t$ when the agent proposes solution $a$, \ie which reflects the principal's decision.

Notably, we provide a revelation principle style of result which states that any mechanism can be reduced to a single-proposal mechanism.
\begin{theorem}\label{thm:revelation}
    Given any mechanism $M$ and any agent's policy $P$, there exists a single-proposal mechanism $M'$ and corresponding deterministic agent's policy $P'$ such that $\Phi_{M,P} \le \Phi_{M',P'}$ and $\Psi_{M,P} \le \Psi_{M',P'}$.
\end{theorem}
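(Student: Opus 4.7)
The plan is to build $(M', P')$ that internally simulates $(M, P)$ and routes its outcome through the single-proposal architecture. Since $M$ and $P$ are fixed known inputs to the construction, both the principal (who designs $M'$) and the agent (who designs $P'$) can each maintain a simulated history $\tilde H_t$ of $(M, P)$ that is consistent with what has been observed in $(M', P')$ so far; this simulated history evolves inductively, round by round, as new single proposals and acceptances are recorded.

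The core construction I have in mind is as follows. Set $M'$ so that at every round $t$ the announced eligible set is the trivial one, $E^{(t)} = \Omega^2$, so that any submission is accepted. Define $P'$ to submit, at round $t$, the single solution $a^{(t)}$ that $(M, P)$ would have selected given the simulated history $\tilde H_t$ and the realized utility profile $(X_a^{(t)}, Y_a^{(t)})_{a \in A}$. When $P$ is deterministic, $a^{(t)}$ is a well-defined deterministic function of $(\tilde H_t, \text{utilities})$, and the trajectory of selected solutions under $(M', P')$ coincides path-by-path with that of $(M, P)$. Consequently $\Phi_{M, P} = \Phi_{M', P'}$ and $\Psi_{M, P} = \Psi_{M', P'}$, so both required inequalities hold with equality.

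The main obstacle is extending this construction to randomized $P$, because a deterministic $P'$ cannot in general reproduce a genuinely random distribution over submissions sample-by-sample while dominating both $\Phi$ and $\Psi$ (a single pure realization of $P$ can easily fail to Pareto-dominate its random mixture in the two coordinates simultaneously). I would resolve this by using the realized utility vector, which is itself random and i.i.d.\ across rounds and observable to the agent, as an endogenous source of randomness: via a measurable selection / inverse-CDF argument that exploits the atomless (or otherwise sufficiently rich) structure of the utility distribution on $\Omega^{2|A|}$, $P'$ can deterministically encode the same conditional distribution over submissions that $P$ induces at each round, so that $(M', P')$ reproduces the full joint law of outcomes of $(M, P)$ and matches $\Phi$ and $\Psi$ in expectation. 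Carrying out this derandomization step carefully, so as to preserve both utility bounds at once, is the only technically delicate ingredient; once it is in place, the outcome-matching argument from the deterministic case transfers verbatim and yields the theorem.
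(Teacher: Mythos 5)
Your deterministic-case construction works as a literal existence argument, but note that it differs from the paper's: the paper's $M'$ announces at each round a \emph{singleton} eligible set containing exactly the solution that $M$ would have selected given the (simulated) history, so the agent's only nontrivial choice is to submit that solution or nothing, whereas you announce the all-accepting set $\Omega^2$ and stipulate $P'$ outright. Both yield $\Phi_{M,P}=\Phi_{M',P'}$ and $\Psi_{M,P}=\Psi_{M',P'}$, but the paper's version is closer in spirit to a revelation principle: the eligible set itself pins down the agent's payoff-relevant options, which is what licenses the later sections' exclusive focus on eligible-set design against (approximately) best-responding agents. Under your trivial eligible set, $P'$ bears no relation to the agent's incentives.

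The genuine gap is in the randomized case. You correctly identify the obstacle (a single pure realization of $P$ need not Pareto-dominate its mixture in both coordinates), but your proposed repair --- harvesting randomness from the realized utility vector via an inverse-CDF / measurable-selection argument --- fails on two counts. First, the paper explicitly treats the deterministic-utility setting, where each $(X_a^{(t)},Y_a^{(t)})$ is a point mass and there is no entropy to harvest. Second, and more fundamentally, $P$'s randomization at round $t$ is a distribution \emph{conditional on} the full realization $\omega(t)$ of all solutions' utilities, which the agent observes before acting; a deterministic function of $\omega(t)$ is a point mass conditional on $\omega(t)$, so it cannot reproduce a nondegenerate conditional law no matter how atomless the marginal of $\omega(t)$ is. You cannot use the same random variable both as the conditioning information and as the independent seed. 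The paper instead derandomizes by choosing, for each realization $\omega(t)$, a seed $\sigma^*(\omega(t))$ maximizing the agent's expected payoff and letting $P'$ play $P(\sigma^*(\omega(t)))$; this makes $\Psi$ weakly increase by construction, and the paper argues it cannot strictly increase because $P$ was assumed payoff-maximizing for the agent. To salvage your route you would need a selection argument of this kind rather than a law-reproduction argument.
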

Thanks to the reduction above, we can essentially focus on the single-proposal mechanism, and the agent only needs to determine which solution to submit at each round.
Thus, unless specified explicitly, we now focus on the single-proposal mechanism.
Note that the reduction from any deterministic mechanism with deterministic policy follows from a variant of the proof of the standard revelation principle~\citep{nisan2007algorithmic}.
For randomized policy, we can reduce it to a deterministic policy by sequentially derandomizing each round's random events in a backward manner.

\paragraph{Approximately best response and Stackelberg regret}
Our construction of a mechanism against a strategic agent requires a notion of approximate best response of the agent, defined as follows.
\begin{definition}[$\eps$-best response]
	Given a mechanism $M$ and history $H_t$, let $A_E$ be a union of the set of eligible solutions given eligible set $E$ and the null outcome $\perp$.
	Then, the $\eps$-best response at round $t$ for eligible set $E$ is defined by
	\begin{align*}
		\br_\eps^{(t)}(E) = \{a \in A_E: y_a^{(t)}\ge y_{a'}^{(t)}- \eps, \forall a' \in A\}.
	\end{align*}
	If $\eps = 0$, we simply say best response and denote by $\br^{(t)}$.
\end{definition}
Whenever there are multiple solutions as best response, we assume that a myopic agent plays in favor of the principal, \ie submits the solution that maximizes the principal's utility.

Fundamentally, the dynamics of the single-proposal mechanism belongs to a repeated Stackelberg game in which the principal moves first by announcing an eligible set, and then the agent follows by proposing solutions, at each round.
In repeated Stackelberg games (possibly with strategic agent), typical objective is to minimize a cumulative regret compared to the case when the mechanism knows the optimal eligible set in hindsight, and the agent \emph{myopically} responds to the principal's move.
In our setting, this benchmark boils down to the case under which the mechanism knows the distribution of $X_a^{(t)}$ and $Y_a^{(t)}$ in hindsight, while the agent best responds to the principal's eligible set at each round.
In this case, the optimal principal's utility can be written as,
\begin{align}\label{eq:opt_single}
	\opt = \max_{E \subset \Omega^2}\Ex{x_{\br^{(t)}(E)}^{(t)}}.
\end{align}
Thus, Stackelberg is defined as follows.
\begin{definition}[Stackelberg regret]
	Given a mechanism $M$ and agent's policy $P$, suppose that the agent submits solution $a_t$ at each round $t$.
	Then, Stackelberg regret is defined by
	\begin{align*}
		\reg_{M,P}(T) = T\cdot \opt - \sum_{t=1}^T x_{a_t}^{(t)}.
	\end{align*}
\end{definition}
Furthermore, we define a worst-case Stackelberg by maximizing over the agent's policy, $\wreg_M(T) = \max_{P \in \cP}\reg_{M,P}(T)$.
Let $\cP_\eps$ be a family of policy under which the agent always plays $\eps$-best response.
Then, we define $\wreg_M(T,\eps) = \max_{P \in \cP_\eps}\reg_{M,P}(T)$.
If the agent is myopic, we abuse $\reg_M(T) = \wreg_M(T,0)$ to denote its worst-case Stackelberg regret.

\section{Deterministic Setting}\label{sec:deter}
We start with a simple setting in which the agent is myopic and the utility is deterministic.
In this case, the principal needs to learn the optimal eligible set, without regard to the agent's strategy.
In this case, for notational simplicity, we drop the superscript $(t)$ since the utility remains the same across the rounds. 
Our main result with myopic agent is presented as follows.
\begin{theorem}\label{thm:deter_myop}
    There exists a mechanism with $\reg(T) = O(\min(K,\log\log T))$ against myopic agent.
\end{theorem}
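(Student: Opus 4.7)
The plan is to reduce to the deterministic repeated posted-price mechanism (RPPM) and then combine two complementary subroutines. By Theorem~\ref{thm:revelation} it suffices to consider single-proposal mechanisms. Observe that $\opt = x^{\star} := \max_{a \in A} X_{a}$: the threshold mechanism $E_{x^{\star}}$ leaves eligible only solutions of value $x^{\star}$, so the agent is forced to submit one of them and the principal receives exactly $x^{\star}$, while no eligible set can yield more. Hence the learning task reduces to identifying the scalar $x^{\star} \in [0,1]$.

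For the $O(\log\log T)$ bound I would exploit the following correspondence with deterministic RPPM. Under threshold $E_{\tau}$, if $\tau \le x^{\star}$ then at least one solution is eligible, the agent submits some $a$ with $X_{a} \ge \tau$, and the principal obtains at least $\tau$; if $\tau > x^{\star}$ the eligible set is empty and the principal obtains $0$. This matches the payoff structure of deterministic RPPM with buyer value $v = x^{\star}$, with the bonus that the principal observes the actual $X_{a}$ on acceptance. Plugging in the Kleinberg--Leighton doubly-exponential price search on an interval $[L_{t},U_{t}] \ni x^{\star}$ (querying geometric-mean-style prices and halving in log-scale each phase) then yields $O(\log\log T)$ regret.

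For the $O(K)$ bound I would analyze a simple climbing subroutine. Initialize $\tau_{1} = 0$; whenever the agent proposed $a^{(i-1)}$ in the previous round, take the next eligible set to be $E^{>}_{X_{a^{(i-1)}}} = \{a : X_{a} > X_{a^{(i-1)}}\}$. Each accepted proposal strictly increases the observed $x$-value, so since there are at most $K$ distinct values of $X_{a}$ the agent must decline within $K+1$ rounds, at which point the largest observed value equals $x^{\star}$ and the algorithm commits to $\tau = x^{\star}$ forever after. Each exploration round contributes regret at most $1$, giving total regret $O(K)$.

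To obtain $O(\min(K,\log\log T))$ from a single policy, I would combine the two subroutines via a hybrid: run the climbing subroutine for the first $\log\log T$ rounds and, if the agent has not yet declined, switch to Kleinberg--Leighton binary search warm-started at $L = X_{a^{(\text{last})}}$, $U = 1$. If climbing terminates within $\log\log T$ rounds then $K + 1 \le \log\log T$ and the total regret is $O(K)$; otherwise $K > \log\log T$ and the total regret is $O(\log\log T)$ from $\log\log T$ climbing rounds plus $O(\log\log T)$ from the subsequent binary search. I expect the main obstacle to be the clean composition of the two phases: verifying that the warm-started Kleinberg--Leighton analysis preserves its doubly-exponential shrinkage when $L$ is inherited from climbing (rather than from the algorithm's own bisection history), and handling ties in $X_{a}$ via the strict threshold $E^{>}_{\tau}$ so that the climbing progress invariant ``each accepted proposal has strictly larger $X$-value than the previous one'' is never violated.
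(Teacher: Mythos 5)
Your proposal is correct and takes essentially the same route as the paper: the $O(\log\log T)$ bound via a reduction to deterministic RPPM with buyer value $\max_{a}X_a$ followed by the Kleinberg--Leighton search, and the $O(K)$ bound via the iterative climbing scheme with strict thresholds $E^{>}_{X_a}$ that excludes at least one solution per step. Your explicit hybrid for realizing the minimum with a single policy when $K$ is unknown is a minor refinement of the paper's argument, which simply compares $K$ with $\log\log T$ and runs whichever subroutine is better.
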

The proof is based on two algorithms, one of which relies on a novel connection between our problem and the repeated posted-price mechanism problem (henceforth RPPM) by~\cite{kleinberg2003posted}, and the other is a simple algorithm that iteratively finds a (slightly) better solution.
In the former, we mainly construct a reduction from our problem to RPPM, and thus recovers the regret bound $\log \log T$ of~\cite{kleinberg2003posted}.\footnote{Note that any state-of-the-art result can be carried over to our problem's regret bound, due to our reduction.}
In the latter algorithm with regret bound $O(K)$, the algorithm iteratively updates the eligible set so that it \emph{excludes} at least one suboptimal solution at each round, until there's no eligible solution.
Formal definition of RPPM, pseudocode of the algorithms, and the proof can be found in the appendix.

This result implies an intimate connection between our problem and RPPM, however, we observe that this does not hold beyond this simplistic setting.
In fact, the utilities are always linearly negatively correlated in RPPM, on the other hand, in our setting they can be arbitrarily correlated.
Moreover, the agent has multiple solutions to choose compared to only two actions of accept or reject in RPPM.

\subsection{Strategic Agent}\label{sec:deter_strategic}
Next, we consider a more challenging scenario in which the agent tries to strategize over the rounds.
Since we cannot assume that the agent will truthfully best respond to the mechanism at each iteration, instead,
he possibly tries to deceive the mechanism by untruthfully submitting a solution.
Thus, we need to design a mechanism that is robust to the strategic behavior.
This may indeed be plausible in practice, for instance in our online labor market example, the platform may try to deceive the task requester to not strain highly qualified workers.
Intuitively, this will be especially true when the platform does not have a large number of workers.

Mainly, we characterize the regret upper bound with respect to two types of assumptions.
The first version of the results relies on a relatively simple assumption such that the agent's utility is uniformly bounded below by some constant.
The latter depends on the Lipschitz continuity of the utility across the solutions and the density of solutions in the utility space.
For each setting, we provide regret upper bounds and matching lower bounds.
This justifies the necessity of the assumptions imposed, thereby characterizing the regimes in which the principal can attain large utility.

Before presenting the results, we introduce a notion of delayed mechanism, which will be useful in dealing with strategic agent.
Formally, we say that mechanism $M$ is $D$-delayed, if at each round $t$, it uses $H_{\max(1,t-D)}$ to decide its eligible set $E_t$.
Delayed mechanism effectively restricts the strategic agent's behavior, as follows.
\begin{lemma}[\cite{haghtalab2022learning}]
	\label{thm:haghtalab1}
    Given $\gamma \in (0,1)$, if we set $D = \lceil T_\gamma \log(T_\gamma / \eps)\rceil$, then $D$-delayed mechanism $M$ satisfies $\wreg_M(T) \le \wreg_M(T, \eps)$.
\end{lemma}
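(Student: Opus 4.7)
The plan is to show that under the $D$-delayed mechanism with $D = \lceil T_\gamma \log(T_\gamma/\eps)\rceil$, every discounted-utility-maximizing agent policy must prescribe an $\eps$-best response at every reachable history; since $\wreg_M(T)$ is attained by such a utility-maximizing policy (with worst-case tie-breaking against the principal), this forces the maximizer into $\cP_\eps$ and yields $\wreg_M(T)\le\wreg_M(T,\eps)$. The essential structural consequence of the delay is that the agent's action at round $t^*$ leaves the eligible sets $E_{t^*},E_{t^*+1},\ldots,E_{t^*+D-1}$ untouched and can only influence $E_{t^*+D},\ldots,E_T$.

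For the contradiction, suppose a utility-maximizing policy $P^*$ plays some $a\notin\br_\eps^{(t^*)}(E_{t^*})$ at a reachable history $H_{t^*}$, and let $a^*\in\br_0^{(t^*)}(E_{t^*})$; by the definition of $\eps$-best response, $y_{a^*}^{(t^*)}-y_a^{(t^*)}>\eps$. I would build an alternative policy $P'$ that agrees with $P^*$ everywhere except at $H_{t^*}$, where it plays $a^*$ instead. On the post-$H_{t^*}$ branch, $P'$ replays at rounds $t^*+1,\ldots,t^*+D-1$ the very same actions that $P^*$ takes on the original (pre-deviation) branch; this is feasible because, by the delay structure, the eligible sets at those rounds coincide under both branches, so both policies face identical sub-games there and accrue identical per-round utilities.

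It therefore suffices to compare the round-$t^*$ contribution with the round-$\ge t^*+D$ tail. At $t^*$, $P'$ gains $\gamma^{t^*-1}(y_{a^*}^{(t^*)} - y_a^{(t^*)})>\gamma^{t^*-1}\eps$. On rounds $s\ge t^*+D$, each per-round utility difference is at most $1$ in magnitude, so
\[
\Bigl|\sum_{s=t^*+D}^T \gamma^{s-1}\bigl(y_s(P') - y_s(P^*)\bigr)\Bigr|
 \;\le\; \gamma^{t^*+D-1}\,T_\gamma
 \;=\; \gamma^{t^*-1}\,\gamma^{D}T_\gamma.
\]
Using $\ln\gamma\le -(1-\gamma) = -1/T_\gamma$ together with the choice of $D$ gives $\gamma^{D}T_\gamma \le T_\gamma e^{-D/T_\gamma}\le \eps$, so the potentially adverse tail is at most $\gamma^{t^*-1}\eps$ and cannot offset the strict gain at round $t^*$. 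Hence $P'$ strictly improves upon $P^*$, contradicting optimality, and the lemma follows. I expect the main obstacle to be the careful bookkeeping in the construction of $P'$ — justifying the matched continuation on the post-deviation branch under the delay structure and handling off-by-one indices in the definition of "$D$-delayed" — rather than the magnitude estimate itself, which is controlled entirely by the single inequality $\gamma^{D}T_\gamma\le\eps$.
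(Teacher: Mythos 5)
Your proof is correct and is essentially the canonical argument: the paper does not prove this lemma itself (it is imported verbatim from Haghtalab et al.\ 2022, with only the one-sentence intuition about the discounted future not compensating an $\eps$ loss today), and your single-deviation exchange argument — unchanged eligible sets for $D$ rounds after the deviation, a strict gain of more than $\gamma^{t^*-1}\eps$ at the deviation round, and a tail bounded by $\gamma^{t^*-1}\gamma^{D}T_\gamma\le\gamma^{t^*-1}\eps$ via $\ln\gamma\le-1/T_\gamma$ — is exactly the proof behind that citation. The only point worth flagging is that you correctly (and necessarily) read $\wreg_M(T)$ as ranging over discounted-utility-maximizing agent policies; taken literally over all of $\cP$ the statement would be false, so your parenthetical about the maximizer being a rational agent with adversarial tie-breaking is doing real work.
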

Intuitively, if $D$ gets larger enough, the agent with discounted utility is less incentivized to deviate from the best response at each round since the discounted utility after $D$ rounds may not be enough to make up for the loss of $\eps$ utility in the current turn. 

\paragraph{Uniformly bounded agent utility}
Formally, we first assume that $Y_a > \ymin$ for any $a \in A$ and the principal is also aware of this lower bound.
Our regret bound will accordingly be parameterized with respect to $\ymin$.
This assumption is plausible since in our online labor market example, the task requester and the worker are typically contracted to pay an intermediary fee to the platform and thus constitute a reasonable amount of minimum payoff to the agent.
The existence of such a minimum utility  effectively allows us to compute the necessary delay to make
the agent approximately myopic, thanks to Theorem~\ref{thm:haghtalab1}.

Leveraging the minimum utility of the agent, we consider a variant of the algorithm used in the myopic deterministic setting, by introducing a delay in reacting to the agent's feedback.
Then, we can obtain the following regret bound.
\begin{theorem}\label{thm:alg1}
    There exists an algorithm with $\wreg(T) = O(K\tgamma \log{\frac{\tgamma}{\ymin}})$ against $\gamma$-discounting strategic agent.
\end{theorem}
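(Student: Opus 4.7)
The plan is to lift the iterative $O(K)$-regret algorithm from the proof of Theorem~\ref{thm:deter_myop} into the strategic setting by wrapping it in the $D$-delayed mechanism framework of Lemma~\ref{thm:haghtalab1}. Concretely, I would set $\eps = \ymin/2$ and delay $D = \lceil \tgamma \log(2\tgamma/\ymin) \rceil$; Lemma~\ref{thm:haghtalab1} then reduces bounding $\wreg_M(T)$ to bounding $\wreg_M(T,\eps)$ against an $\eps$-best-responding agent. The mechanism $M$ maintains a running threshold $\hat{x}$ (initially $0$); at round $t$ it announces the strict eligible set $E_{\hat{x}}^{>}$, where $\hat{x}$ is the maximum principal utility appearing in the lagged history $H_{t-D}$. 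Because of the delay, each update to $\hat{x}$ takes effect only after $D+1$ rounds, partitioning the horizon into ``phases'' of length $D+1$ within which the eligible set is constant. Once the mechanism observes a full phase with no submitted solution (equivalently, $\hat{x}$ equals $\opt := \max_a X_a$), it switches to the non-strict eligible set $E_{\opt}$ for all remaining rounds --- the commit phase.

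The key observation that drives the analysis is that, with $\eps < \ymin$, an $\eps$-best-responding agent can never play $\perp$ whenever the eligible set is nonempty: every eligible $a$ satisfies $y_a = Y_a \geq \ymin > \eps$, whereas $y_\perp = 0$, which violates the $\eps$-best condition at $\perp$. Hence, in every exploration phase in which $E_{\hat{x}}^{>}$ is nonempty, the agent must submit some $a$ with $X_a > \hat{x}$, strictly increasing $\hat{x}$ at the end of that phase. Since $\{X_a : a \in A\}$ has at most $K+1$ distinct values, exploration must terminate within at most $K+1$ phases. Moreover, during the commit phase the eligible set $E_{\opt} = \{a : X_a = \opt\}$ consists entirely of solutions yielding principal utility exactly $\opt$, so any $\eps$-best response --- which the same argument shows must lie in $E_{\opt}$ --- contributes zero per-round regret.

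Combining these pieces, exploration contributes at most $(K+1)(D+1)\cdot 1 = O(K\tgamma \log(\tgamma/\ymin))$ regret (since per-round regret is bounded by $\opt \leq 1$), while the commit phase contributes zero, matching the stated bound. The main obstacle I anticipate is verifying the interplay between the delay and the iterative update rule --- specifically, ruling out that the agent exploits the $D$-round lag to strategically ``freeze'' $\hat{x}$ at a suboptimal value by withholding informative submissions. This is precisely what Lemma~\ref{thm:haghtalab1} is designed to prevent: with the chosen $D$, any such deviation is worth at most $\eps = \ymin/2$ to the agent in discounted value, which together with the minimum-utility assumption forces $\eps$-best responses that make progress. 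A secondary technicality is detecting the event $\hat{x} = \opt$ without knowing $K$ in advance; this can be handled by switching to the commit phase the first time a full length-$(D+1)$ window elapses with no submission, absorbing the extra $O(D)$ detection overhead into the stated bound.
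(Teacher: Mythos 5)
Your proposal is correct and follows essentially the same route as the paper's proof: the paper's Algorithm \textsc{DelayedIterativeSearch} likewise wraps the iterative $O(K)$-regret search in a $D$-delayed mechanism with $D = \tgamma\log(\tgamma/\ymin)$, and its key Lemma (the analogue of your ``never play $\perp$'' observation) shows that an $\eps$-best-responding agent with $\eps \le \ymin$ must submit an eligible solution whenever one exists, so each phase of $O(D)$ rounds strictly improves the threshold and at most $K$ phases occur. Your choice of $\eps = \ymin/2$ versus the paper's $\eps = \ymin$ and your phrasing via lagged history rather than explicitly re-announcing $E^{>}_{\tau}$ for $D$ rounds are immaterial differences.
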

Essentially, the delay introduced in the algorithm induces the agent to behave \emph{restrictively strategic},
and we can effectively bound the regret to be constant, assuming the other parameters are constants.
Note, however, that the regret bound linearly depends on the number of agent's solutions $K$.
Obviously, if $K$ tends to be large in some cases, our regret guarantee here is doomed to be pessimistic.
This is indeed plausible in practice, since the agent may have growing number of solutions with respect to $T$, especially for online platforms.



This limitation can be handled by shrinking the eligible sets more in an aggressive manner, instead of sequentially seeking the next-best solutions.
Then, the linear dependency on $K$ can further be wiped out as follows.
\begin{theorem}\label{thm:alg2_regret}
    There exists an algorithm with $\wreg(T) = O(\tgamma \log{\frac{\tgamma}{\ymin}} + \log{T})$ against $\gamma$-discounting strategic agent.
\end{theorem}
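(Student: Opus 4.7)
My plan is to combine Haghtalab's delaying technique (Lemma~\ref{thm:haghtalab1}) with an aggressive bisection schedule over the threshold space, replacing the $O(K)$ sequential elimination of Theorem~\ref{thm:alg1} by a logarithmic number of bisection steps. Fix $\eps = \ymin/2$ and $D = \lceil \tgamma \log(\tgamma/\eps)\rceil = \Theta(\tgamma \log(\tgamma/\ymin))$. Because $Y_a > \ymin > \eps$ for every solution, an $\eps$-best-response agent never submits $\perp$ while some solution is eligible; hence under threshold $\tau$, observing $\perp$ certifies $\opt < \tau$ while observing any $a \ne \perp$ certifies $\opt \ge X_a \ge \tau$. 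These are exactly the two comparison outcomes that a binary search on $\opt = \max_a X_a \in [0,1]$ requires, and by the single-proposal reduction we only need to locate $\opt$ to within $1/T$.

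The proposed $D$-delayed mechanism maintains an interval $[L, R] \subseteq [0,1]$ known to contain $\opt$, initialized to $[0,1]$. At each round it announces a threshold-based eligible set with $\tau = (L+R)/2$, and then uses the freshest outcome available in the delayed window $H_{t-D}$ to halve $[L, R]$: a $\perp$ observation shrinks $R$ to the corresponding tested threshold, while a non-null observation $a$ shrinks $L$ to $X_a$. After $\log_2 T$ effective bisections one has $R - L \le 1/T$, at which point the mechanism commits to $\tau^\star = L$ for the remainder of the horizon.

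I would decompose the cumulative regret into three segments. First, a warm-up of length $\Theta(D)$ during which $H_{t-D}$ has not yet absorbed a fresh bisection outcome; each such round has per-round regret at most $\opt \le 1$, contributing an aggregate of $O(D)$. Second, a bisection phase of length $\Theta(\log T)$ in which each round consumes one new delayed observation and halves the interval, with per-round regret again at most $1$ and aggregate $O(\log T)$. Third, an exploitation phase during which $R - L \le 1/T$ guarantees that the agent submits some $a$ with $X_a \ge \opt - 1/T$, giving per-round regret at most $1/T$ and aggregate $O(1)$ over at most $T$ rounds. Summing the three segments yields $\wreg_M(T, \eps) = O(D + \log T)$, and Lemma~\ref{thm:haghtalab1} lifts this to $\wreg_M(T) = O(\tgamma \log(\tgamma/\ymin) + \log T)$ against any strategic agent.

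The main technical obstacle is verifying that the delayed bisection advances at a unit rate per round once the warm-up is past, so that the delay only contributes additively rather than blowing up to $O(D \log T)$. The aggressive pre-commitment is what makes this possible: because the mechanism updates its threshold at every round rather than repeating each threshold until its outcome becomes unambiguously retrievable, each fresh observation entering $H_{t-D}$ after round $D$ can be charged to exactly one bisection step and cleanly updates $[L, R]$ while preserving the invariant $\opt \in [L, R]$. Careful bookkeeping is required at the boundary between the warm-up and the bisection phase and to show that null outcomes occurring along the bisection trajectory never contribute more than a constant overhead per step, so that the per-round regret during the bisection phase is genuinely $O(1)$ rather than $\Omega(D)$.
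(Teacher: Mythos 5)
There is a genuine gap in your argument, and it sits exactly where you flagged "the main technical obstacle": the claim that the delayed bisection advances at a unit rate per round. Binary search is adaptive --- the threshold for step $k+1$ is a function of the outcome of step $k$ --- and in a $D$-delayed mechanism the outcome of a threshold announced at round $t$ only enters the usable history at round $t+D$. So after the warm-up, the observations arriving at rounds $t+1,\dots,t+D-1$ are outcomes of the \emph{same stale midpoint} you were announcing before the last update; they are redundant and cannot halve the interval again. The first observation of the \emph{new} midpoint arrives only $D$ rounds after you first announce it. Pipelining does not help here: the bisection necessarily advances one step per $D$ rounds, so your "bisection phase" has length $\Theta(D\log T)$, not $\Theta(\log T)$. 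This is fatal for the bound, because during the $D$ waiting rounds of a \emph{failed} test (midpoint above $\opt$) your mechanism keeps announcing that too-high midpoint, no solution is eligible, and the per-round regret is $\opt$, which can be a constant. Taking $\opt = 1/2+\delta$ for tiny $\delta$ makes essentially every test after the first fail, giving $\Omega(D\log T) = \Omega\bigl(\tgamma\log(\tgamma/\ymin)\log T\bigr)$ regret --- multiplicative rather than the claimed additive $O(\tgamma\log(\tgamma/\ymin)+\log T)$.

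The paper's Algorithm \textsc{DelayedBinarySearch} closes this gap with one extra device you are missing: each iteration announces the midpoint threshold $E^{>}_{\tau}$ for a \emph{single} round (regret at most $1$, and there are only $O(\log T)$ such rounds), and then spends the $D$ waiting rounds announcing the \emph{safe} set $E_{l}$, under which the $\eps$-best-responding agent is guaranteed to submit a solution with principal utility at least $l$. Hence the per-round regret during the $k$-th waiting block is at most $r-l \le 2^{-k}$, and the total over all blocks telescopes to $D(1+\tfrac12+\tfrac14+\cdots)\le 2D$. That geometric decay is what turns the unavoidable $\Theta(D\log T)$ rounds of waiting into only $O(D)$ regret, yielding $O(D+\log T)$ overall. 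The rest of your outline (choice of $\eps$ relative to $\ymin$, the invariant $\opt\in[L,R]$, stopping at width $1/T$, and invoking Lemma~\ref{thm:haghtalab1} to pass from $\wreg_M(T,\eps)$ to $\wreg_M(T)$) matches the paper and is fine.
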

Note that the regret no longer depends on the number of solutions $K$, but instead on $\log T$.
Our algorithm keeps shrinking the eligible set until it concludes that the truly optimal solution lies within at most $1/T$ to the currently best solution.
Afterward, the regret is at most $1/T \cdot T$, thus does not affect the overall regret upper bound.
Intuitively, to remove the dependence on the number of solutions, such a logarithmic burden on $T$ is essential to guarantee that our eventual solution is correct up to $O(1/T)$ distance.

We further note that both the regret bounds of Theorem~\ref{thm:alg1} and~\ref{thm:alg2_regret} have a logarithmic dependency on $1/\ymin$.
Assuming that $T_\gamma = 1/(1-\gamma) = O(1)$, this regret bound yields a sublinear regret upper bound if $\ymin = e^{-o(T)}$, but becomes detrimental the other way around.
Interestingly, however, we show that this dependency is necessary to obtain sublinear regret for any algorithm, by formally proving that no algorithm can achieve sublinear regret if $\ymin \le e^{-T}$ against the strategic agent. 




\begin{theorem} \label{thm:linear_regret}
    If $\ymin \leq e^{-T}$, then any algorithm has $\wreg(T) = \Omega(T)$ against $\gamma$-discounting strategic agent.
\end{theorem}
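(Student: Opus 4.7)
My approach is to exhibit a two-solution instance in which the strategic agent can suppress the high-value solution at exponentially small cost, preventing any mechanism from ever observing the informative signal needed for low regret. I take $I = \{a_1, a_2\}$ with $a_1 = (X_{a_1}, Y_{a_1}) = (1, \ymin)$ and $a_2 = (X_{a_2}, Y_{a_2}) = (0, 1)$, setting $\ymin = e^{-T}$. The benchmark is $\opt = 1$, achieved by any threshold eligible set $E_\tau$ with $\tau \in (0, 1]$: such $E_\tau$ screens out $a_2$ and forces the myopic agent into $a_1$ for principal utility $1$ per round, so the benchmark total is $T$.

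Fix any mechanism $M$. Consider the agent's deceptive policy $P_D$ that proposes $a_2$ whenever $a_2 \in E_t$ and the null outcome $\perp$ otherwise, so $a_1$ is never revealed. Under $P_D$ the principal's per-round utility is either $X_{a_2} = 0$ (if $a_2$ is accepted) or $0$ (if $\perp$), so the cumulative principal utility is $0$ and $\reg_{M, P_D}(T) = T$. Equivalently, by an indistinguishability argument, $M$'s transcript under $P_D$ in $I$ coincides with its transcript in the ``null'' instance $I' = \{a_2\}$ under truthful play; since $\opt(I') = 0$, the mechanism must extract exactly $0$ cumulative utility under both scenarios.

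To justify that $P_D$ belongs to the strategic agent's policy class, I verify it is an $\eps$-best response at every round for $\eps = \ymin \le e^{-T}$. When $a_2 \in E_t$, proposing $a_2$ gives agent utility $1 = \max_{a} y_a$, so $a_2 \in \br_\eps^{(t)}(E_t)$. When $a_2 \notin E_t$, the best eligible option gives at most $y_{a_1} = \ymin$, and the null $\perp$ has value $0 \ge \ymin - \eps$, so $\perp \in \br_\eps^{(t)}(E_t)$. Hence $P_D \in \cP_\eps$ with $\eps = \ymin$, and the monotonicity $\wreg_M(T) \ge \wreg_M(T, \eps)$ (since $\cP \supseteq \cP_\eps$) yields $\wreg_M(T) \ge T = \Omega(T)$.

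The main obstacle is a delicate one: $P_D$ is only an $\ymin$-best response, not a strict global best response, because revealing $a_1$ under an alternative policy $P'$ could change $M$'s future eligible sets in a way that rewards the agent. To turn this into a lower bound on $\wreg_M(T)$ under a strict best-response convention, I would either appeal to the $\eps$-best-response formalism as the operational definition of ``strategic'' when $\eps$ is exponentially small, or perform a case split on $M$'s behavior on the null transcript: if $M$ accepts $a_2$ with any non-negligible discounted mass, then $U(P_D) > \tgamma\ymin$, exceeding the $\tgamma \ymin$ utility any smart-learning $M$ leaves to a truthful agent (who is forced into $a_1$), so the strategic agent indeed plays $P_D$; otherwise, $M$'s near-constant rejection of $a_2$ can be defeated by swapping in a parameterized instance $\{a_1 = (\alpha, \ymin), a_2 = (0, 1)\}$ with $\alpha$ chosen just below $M$'s effective threshold, where the same deceptive strategy again yields $\Omega(T)$ regret.
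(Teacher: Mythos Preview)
Your construction has a genuine gap. Showing that $P_D$ is an $\ymin$-best response does not certify that the $\gamma$-discounting strategic agent actually plays $P_D$: the agent selects a policy maximizing his own discounted utility, and your containment argument $\cP\supseteq\cP_\eps\Rightarrow\wreg_M(T)\ge\wreg_M(T,\eps)$ relies on reading $\wreg_M(T)$ as a maximum over \emph{all} policies, which would render every upper bound in the paper vacuous (the agent could simply always submit~$\perp$). Under the operative meaning---regret against a utility-maximizing agent---your single instance $I=\{(1,\ymin),(0,1)\}$ is defeated outright by the mechanism that always announces $E^{>}_{0}$: only $a_1$ is eligible, and since submitting $a_1$ yields $\ymin>0$ while $\perp$ yields $0$, the utility-maximizing agent reveals $a_1$ in every round, giving the principal zero regret on $I$. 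Your indistinguishability twin $I'=\{a_2\}$ cannot rescue this, because $\opt(I')=0$ and hence $I'$ imposes no constraint on $M$ whatsoever; any behavior of $M$ on the null transcript is consistent with sublinear regret in $I'$. Your final case-split sketch inherits the same defect, since the swapped instance still has $X_{a_2}=0$.

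The paper closes exactly this hole by giving the decoy solution \emph{positive} principal utility: it takes $P_2=\{a_1=(d,1)\}$ and $P_1=\{a_1=(d,1),\,a_2=(2d,y)\}$ with $y\le e^{-T}$. Now $\opt(P_2)=d>0$, so any mechanism with sublinear regret on $P_2$ can exclude $a_1$ in at most $K=o(T)$ rounds. Transported to $P_1$ (indistinguishable while the agent hides $a_2$), a hiding agent is guaranteed an acceptance of $a_1$ within any window of $K{+}1$ rounds, securing discounted utility at least $\gamma^{j+K}$; revealing $a_2$ at round $j$ caps his subsequent utility at $\gamma^j y/(1-\gamma)$. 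Comparing the two and plugging $y\le e^{-T}$ forces $K=\Omega(T)$, the desired contradiction. This two-instance tension with a strictly positive decoy value is the missing idea in your proposal.
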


Thus, the principal suffers a large amount of regret by delegating to the agent whose utility tends to be very small.

\paragraph{Lipschitz utility with dense solutions}
Next, we consider the case where the principal is not aware of any lower bound on $\ymin$, or such a lower bound does not exist.
To cope with this lack of information on $\ymin$, we assume that there is no significant disparity in the utility between two close solutions for both the principal and the agent, and the solutions are densely spread in the utility space.
Under these assumptions, we provide an algorithm to find a near optimal solution.



These assumptions are formally presented as follows.
\begin{assumption}[$d$-dense]
        Let $d_X(a,b) = |X_a-X_b|$.
	A problem instance is $d$-dense for some $d > 0$ if for any two solutions $a,b \in A$, either of the following is satisfied: (i) $d_X(a,b) \le d$ or (ii) $d_X(a,b) > d$ and there exists another solution $c$ such that $d_X(a,c) \le d$ and $d_X(b, c) \le d_X(a, b)$.
\end{assumption}


\begin{assumption}[$L_1,L_2$-Lipschitz continuity]
    There exists absolute constants $L_1,L_2 > 0$ such that for any $a,b \in A$, we have $L_1 \cdot d_X(a,b) \le d_Y(a,b) \le L_2 \cdot d_X(a,b)$, where $d_X(a,b) = |X_a - X_b|$ and $d_Y(a,b) = |Y_a - Y_b|$.
\end{assumption}





Our assumption of densely spread solutions is innocuous since the solutions will be packed more in a~compact manner as the number of solutions grow.
Otherwise, if the number of solutions is relatively small, then our results on the bounded agent's utility would kick-in, and thus one may recover the sublinear regret.
The Lipschitz continuity assumption is often valid, as it is observed that when the agent's utility for two solutions is similar, the principal's utility follows suit, and vice versa.
For instance, if all the solutions lie in $y = 1-x$, then the Lipschitz condition holds with $(L_1,L_2)$ being $(1,1+\eps)$ or $(1-\eps,1)$ for any choice of $\eps \ge 0$.

Further, we assume that Lipschitz parameters $L_1$ and $L_2$ tend to be close to each other, precisely, $L_2 \geq L_1 > \frac{3}{4} L_2$. Indeed, if there exists a significant difference between $L_1$ and $L_2$, the Lipschitz assumption fails to effectively impose any restrictions.


Leveraging these assumptions, we propose a new algorithm.
Since we lack precise information on the required delay to ensure the submission of a solution, we cannot compel the agent to be approximately myopic.
We propose a modified version of the algorithm used above which effectively leverages the assumptions above to explore superior solutions.


\begin{algorithm}
    \SetAlgoLined
    \While{any solution has not been received}{
        Announce $E_0$.\\
    }
    Let $a_0$ be the proposed solution. \\
    $\alpha \gets L_1 - \frac{3}{4}L_2$, $l \gets X_{a_0}$, $y \gets Y_{a_0}$, $r \gets \min\{1, l + \frac{y}{L_1}\}$, $\eps \gets 4\alpha d$, $D \gets \tgamma \log{\frac{\tgamma}{\eps}}$; \\
    Announce $E_0$ for $D$ rounds. \\
    \While{$r-l > 4d$}
    {
        $\tau \gets \frac{l+r}{2}$; \\
        Announce $E^{>}_{\tau}$. \\
        \leIf{solution $a$ is proposed by the agent}{
            $l \gets X_a$, $y \gets Y_a$
        }
        {
            $r \gets \tau$
        }
        Announce $E_l$ for $D$ rounds. \\
        $r \gets \min\{r, l + \frac{y}{L_1}\}$; \\
    }
    Announce $E_l$ for remaining rounds.
\NoCaptionOfAlgo
\caption{Algorithm 1: \textsc{DelayedProgeressiveSearch}}\label{alg:3}
\end{algorithm}



\begin{theorem}\label{thm:lips-lns-reg}
    If $\alpha := L_1 - \frac{3}{4}L_2 > 0 $,
    then Algorithm~\ref{alg:3} has
    $\wreg(T) = O(\tgamma \log{\frac{\tgamma}{\alpha}} + \log{\frac{1}{d}} + d T)$ against $\gamma$-discounting strategic agent.
\end{theorem}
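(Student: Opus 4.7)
My plan is to first invoke Lemma~\ref{thm:haghtalab1} with $\eps = 4\alpha d$ so that the delay $D = \tgamma \log(\tgamma/\eps)$ inserted between threshold updates in Algorithm~\ref{alg:3} reduces the analysis to an $\eps$-best response agent; this is exactly why the algorithm announces $E_0$ or $E_l$ for $D$ rounds after each change of threshold. Under this reduction, I would partition the horizon into three phases and bound the cumulative regret in each: (i) the initial phase of waiting for the first submission plus the subsequent $D$ delay rounds of $E_0$, (ii) the binary search phase alternating one query round at $E_\tau^{>}$ with $D$ delay rounds at $E_l$, and (iii) the final phase of announcing $E_l$ for the remaining rounds.

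For the binary search, I would maintain the invariant $l \le \opt \le r$, tolerating an $O(\eps/L_1)$ slack that arises from $\eps$-best response. When the agent submits $a$ under $E_\tau^{>}$, the new $l = X_a \ge \tau$ forces $\opt - l \le (r-l)/2$, and the subsequent update $r \leftarrow \min\{r, l + y/L_1\}$ preserves the upper invariant via the Lipschitz lower bound $L_1 \cdot d_X \le d_Y$ applied to the currently recorded $(l, y)$. When the agent instead submits nothing, $\eps$-best response implies that every solution with $X > \tau$ has agent utility at most $\eps$, and combining this with the Lipschitz condition yields $\opt \le \tau$ up to an additive $O(\eps/L_1) = O(d)$, which is absorbed by the termination criterion $r - l \le 4d$. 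Either way, the interval $r - l$ halves, so the loop terminates in $O(\log(1/d))$ iterations. The query rounds then contribute $O(\log(1/d))$ regret, while the delay rounds contribute $\sum_i D(\opt - l_i) = O(D)$ because $\opt - l_i$ decays geometrically. In the final phase, since $l \ge \opt - 4d$ at termination, any solution the agent picks in response to $E_l$ satisfies $X \ge l$, giving per-round regret $O(d)$ and total $O(dT)$.

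The main obstacle will be the null-submission case of the binary search: one must argue, using both directions of the Lipschitz assumption together with the agent's $\eps$-best response, that no solution with $X > \tau$ can simultaneously have both $Y$ close to zero and $X$ far from $l$. This is exactly where the hypothesis $\alpha = L_1 - \frac{3}{4}L_2 > 0$ enters, since a bound on the ratio $L_2/L_1$ is what rules out solutions whose $Y$ is too small to be detected yet whose $X$ would exceed $l + y/L_1$. Combined with the $d$-density assumption, which guarantees that each recorded $(l, y)$ truly represents a nearby attainable solution rather than an isolated extremum, the three phase bounds combine to $O(D + \log(1/d) + dT) = O(\tgamma \log(\tgamma/\alpha) + \log(1/d) + dT)$ after absorbing lower-order contributions to $\log(1/d)$.
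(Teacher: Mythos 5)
Your overall architecture matches the paper's proof: invoke Lemma~\ref{thm:haghtalab1} with $\eps=4\alpha d$, run a binary search maintaining an interval $[l,r]$ containing the optimum with the update $r\leftarrow\min\{r,\,l+y/L_1\}$, charge each block of $D$ delay rounds at most $D(r-l)$ so the blocks sum geometrically to $O(D)$, and pay $O(dT)$ in the final phase. The gap is in the central step: what a null response to $E_\tau^{>}$ certifies. From $\eps$-best response you correctly get that every eligible solution has agent utility at most $\eps$, but combining that with Lipschitz continuity does \emph{not} yield $\opt\le\tau+O(\eps/L_1)$. A solution with tiny $Y$ and large $X$ is not excluded by that reasoning, and such a solution would still count toward $\opt$, since the benchmark agent best-responds myopically and submits any eligible solution with positive utility. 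What is actually needed is a detectability claim whose contrapositive does the work: if \emph{any} solution has $X\ge\tau$, then \emph{some} solution with $X\ge\tau$ has $Y\ge\eps$, so an $\eps$-best responder must submit something; hence a null response certifies that no solution lies above $\tau$ at all, and the interval halves exactly, with no $O(d)$ slack.

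Proving that claim chains three ingredients, two of which you misassign. First, $d$-density places a witness $a'$ with $X_{a'}\in[\tau,\tau+d]$ whenever anything exists above $\tau$ --- that is its role, not ensuring that the recorded $(l,y)$ ``represents a nearby solution'' (the recorded pair \emph{is} an actual proposed solution). Second, the upper Lipschitz bound gives $Y_{a'}\ge Y_{a_0}-L_2(p+d)$ with $p=(r-l)/2$. Third --- and this is the linchpin you never invoke --- the maintained bound $r\le l+Y_{a_0}/L_1$ means $Y_{a_0}\ge 2pL_1$, so that $Y_{a'}\ge 2pL_1-L_2(p+d)\ge 2p\bigl(L_1-\tfrac{3}{4}L_2\bigr)=2p\alpha\ge 4\alpha d=\eps$, using $d\le p/2$ before termination. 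This is where $\alpha>0$ actually enters: it lower-bounds the agent utility of the witness near $\tau$. Ruling out solutions with $X>l+Y_{a_0}/L_1$ is a separate argument using the lower Lipschitz bound, nonnegativity of $Y$, and the fact that a solution with $Y>Y_{a_0}+\eps$ would already have been proposed in place of $a_0$. With that lemma supplied, the rest of your phase accounting goes through as in the paper.
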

In Algorithm~\ref{alg:3}, we maintain an interval,
denoted as $[l, r]$, which encompasses the optimal solution.
It is guaranteed that at every round, a solution $a$ exists such that $X_{a} = l$.
According to the Lipschitz continuity assumption, we can place an upper bound on $r$,
signifying that the optimal solution $a^*$ cannot be significantly distant from $a$.
This is because when the difference between $X_{a^*}$ and $X_{a}$ becomes large,
it is expected that $Y_{a^*} - Y_{a}$ will also be substantial.
This is not possible since $Y_{a^*}$ is non-negative, and cannot be significantly greater than $Y_{a}$,
otherwise, it would have been proposed by the agent in earlier rounds.

With the bounded value of $r$, our objective is to determine if there exists a solution within the right half of the interval. By considering the line $x = \frac{l+r}{2}$, the $d$-dense assumption implies that if a solution exists in the right half, there must be a solution with the principal's utility ranging from $\frac{l+r}{2}$ to $d + \frac{l+r}{2}$. Utilizing the Lipschitz continuity along with the condition on its parameters, we can find a lower bound on the agent's utility within that interval. Consequently, we can introduce an appropriate delay to compel the agent to propose a solution from the right half if it exists. As a result, the algorithm can determine the presence or absence of a solution in the right half and subsequently shrink the interval accordingly. By continuing this procedure, the interval gradually converges toward the optimal solution.

Furthermore, our regret upper bound essentially decomposes $\tgamma$ from $T$, and thus the effect of discount factor is decoupled from the linear dependency of $dT$.
The linear dependency on $dT$ may look pessimistic at first glance, but we reveal that this dependency is indeed optimal, as formally presented as follows. 
\begin{theorem} \label{thm:dense-lw}
    There exists a $d$-dense problem instance such that any algorithm suffers $\wreg(T) = \Omega(dT)$ against $\gamma$-discounting strategic agent.
\end{theorem}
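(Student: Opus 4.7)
The plan is to exhibit a simple two-solution $d$-dense instance that forces any algorithm into $\Omega(dT)$ regret against a $\gamma$-discounting strategic agent. Take the instance $I_B$ with just two solutions, $a_0$ with $(X_{a_0}, Y_{a_0}) = (0, 1)$ and $a_1$ with $(X_{a_1}, Y_{a_1}) = (d, 0)$; since $|X_{a_0} - X_{a_1}| = d$, this instance is trivially $d$-dense. Using the eligible set $E^{>}_{0}$ that excludes $a_0$, the myopic agent is indifferent between $a_1$ and $\perp$ (both yield $Y = 0$) but by the tiebreaking convention proposes $a_1$, giving $\opt = d$ and hence $T\cdot \opt = dT$.

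I would complement $I_B$ with the single-solution reference instance $I_A = \{a_0\}$, also trivially $d$-dense, with $\opt_A = 0$. For any fixed algorithm $M$, consider the agent's strategy $P^{\star}$ in $I_B$: propose $a_0$ whenever $a_0 \in E_t$, and propose $\perp$ otherwise. The key observation is that the history (sequence of eligible sets and proposed solutions) generated under $(M, P^{\star})$ in $I_B$ is identical to the history generated under $M$ in $I_A$, where the agent's only option when $a_0 \notin E_t$ is $\perp$. Since any algorithm must incur $0$ cumulative principal utility on $I_A$ (all $X$-values are $0$), the same $M$ incurs $0$ cumulative principal utility on $I_B$ under $P^{\star}$. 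It then remains to certify that $P^{\star}$ is consistent with $\gamma$-discounting best-responding behavior: when $a_0$ is eligible, proposing $a_0$ is myopically optimal since $Y_{a_0} = 1$ is the maximum; when $a_0$ is not eligible, both $a_1$ and $\perp$ yield zero immediate utility, and a swap/coupling argument shows that choosing $\perp$ weakly improves the agent's discounted future utility, because it produces an $I_A$-compatible history which cannot give $M$ any additional reason to restrict $a_0$ in future rounds. Up to adversarial tiebreaks, which are admissible under the convention $\wreg_M(T) = \max_{P \in \cP}\reg_{M,P}(T)$, $P^{\star}$ is therefore a best response, so $\wreg_M(T) \ge T\cdot\opt - 0 = dT = \Omega(dT)$.

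The main obstacle is handling algorithms that deliberately reward the agent for revealing $a_1$, for example an $M$ that includes $a_0$ in $E_{t+1}$ only after observing a past $a_1$ proposal. For such $M$, the swap argument does not strictly preserve the agent's discounted utility, and the actual best response interleaves some $a_1$ proposals with $a_0$ proposals to trigger favorable inclusions. However, an amortization argument shows that each $a_1$ proposal can subsidize at most a bounded number of future $a_0$-eligible rounds (otherwise the agent's immediate-versus-future tradeoff would dictate an alternative strategy). Hence in any best response the fraction $N_1/T$ of rounds proposing $a_1$ satisfies $N_1/T \le 1 - c$ for an absolute constant $c > 0$, yielding principal utility $dN_1 \le (1-c)dT$ and regret at least $c\,dT = \Omega(dT)$. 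This case analysis on the algorithm's reactive behavior completes the lower bound for all $M$.
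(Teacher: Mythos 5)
Your construction does not establish the lower bound, and the gap you flag in your final paragraph is fatal rather than repairable by amortization. Consider the mechanism that announces $E^{>}_{0}$ in rounds $1,\dots,T-1$ and, in round $T$, announces the all-accepting set $E_0$ if and only if the agent proposed $a_1$ in every previous round (announcing $E^{>}_{0}$ otherwise). On $I_B$ the agent's unique best response is to propose $a_1$ in rounds $1,\dots,T-1$ and $a_0$ in round $T$: complying yields discounted utility $\gamma^{T-1}>0$, while any deviation yields $0$ because $a_0$ is then never eligible. This mechanism therefore earns $d(T-1)$ on $I_B$ (regret $d$) and has regret $0$ on $I_A$ (where $\opt=0$), so neither instance forces $\Omega(dT)$. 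The same example refutes your amortization claim: a single $a_0$-eligible round subsidizes $T-1$ rounds of $a_1$ proposals, precisely because the agent's outside option in $I_B$ is exactly $0$. The structural problem is twofold: (i) since $Y_{a_1}=0$, the agent is indifferent, so an arbitrarily small deferred carrot buys full revelation, and your $P^{\star}$ ceases to be a best response; and (ii) since $\opt_{I_A}=0$, the reference instance imposes no cost on any probing or incentive scheme, so the pair $\{I_A,I_B\}$ creates no trade-off for the principal.

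The paper's proof instead inherits the construction from Theorem~\ref{thm:linear_regret}: $P_1$ has $(X_{a_1},Y_{a_1})=(d,1)$ and $(X_{a_2},Y_{a_2})=(2d,y)$ with $y\le e^{-T}$, while $P_2$ contains only $a_1$; both are $d$-dense. There the agent has a \emph{strict} incentive to hide $a_2$ (revealing it lets the principal lock him out of $a_1$ forever, costing him close to $1$ per round against a gain of at most $y/(1-\gamma)$), and the principal can only test for $a_2$ by excluding $a_1$, which costs $d$ per test round in $P_2$ where $\opt=d>0$. Balancing the number of test rounds against the agent's incentive constraint forces $\Omega(dT)$ regret on at least one of the two indistinguishable instances. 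To salvage your approach you would need both ingredients: a hidden solution with strictly positive but exponentially small agent utility, and a reference instance in which probing is genuinely costly.
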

Its proof easily follows from the proof of Theorem~\ref{thm:linear_regret}.
Thus, this demonstrates the fundamental inevitability of the term $dT$.
It's worth noting that whenever $d$ is subconstant, \eg $d = T^{-c}$ for $c>0$, then our regret upper bound in Theorem~\ref{thm:lips-lns-reg} implies a sublinear regret.

In our online labor market example, since the number of workers in a platform usually grows with respect to the time horizon, their intrinsic qualities might lie more compactly in the utility space as time flows.
For instance, if there are $T^{\eps}$ workers having uniformly distributed utility in a compact utility space for some $\eps \in (0,1)$, their utility will be $O(T^{-\eps})$-densely spread, which would yield $dT = T^{1-\eps} = o(T)$ regret bound.
In words, the requester needs to delegate to a platform with a large number of solutions, \ie delegating to \emph{big business matter}.
Conversely, the platform should maintain more workers to attract requesters, \ie \emph{economy of scale} works, however, a trade-off arises since the platform's gains from strategizing would be limited then.

\section{Stochastic Setting}\label{sec:stoc}
For the stochastic setting, previous algorithms no longer work as the ex-post optimal solution varies across the rounds.
Thus, the objective of the principal here will not be to find the largest threshold $\tau$ to exclude any ex-post suboptimal solutions, but to balance the probability that the agent possesses a solution that belongs to the eligible set and the corresponding utility of the agent's best response therein.
Recall that this phenomenon is already well-captured in our benchmark~\eqref{eq:opt_single} and the corresponding notion of Stackelberg regret.
Given the differences, the principal faces an additional challenge of handling the random noise in the reward, and needs to find the best threshold to balance the trade-off presented above.

To cope with it, we reduce our problem to a stochastic bandit problem which is standard in the literature~\citep{kleinberg2003posted,amin2013learning, haghtalab2022learning}.
In the stochastic multi-armed bandit problem, a principal has a set of $Q$ arms, indexed by $i \in [Q]$, and needs to decide which arm to pull at each round given the time horizon $T$.
Each arm $i$ is equipped with a reward distribution $D_i$ with support $[0,1]$.
Given $\mu_i = \Exu{r\sim D_i}{r}$, the principal's objective is to minimize expected regret defined by $\reg(T) = T\cdot \max_{i \in [Q]}\Exu{r \sim D_i}{r} - \sum_{t=1}^T \Ex{r_t}$, where $r_t$ denotes the random reward of the arm selected at round $t$ by the principal.

We first discretize the space of principal's utility into the set of $i/Q$ for $i \in [Q]$ for some carefully chosen parameter $Q$.
Each element $i/Q$ corresponds to a single arm, which represents the threshold $\tau$ that the principal can commit to at each round.
By pulling the arm $i$, the principal is essentially announcing an eligible set of $E_{\tau_i} = \Ind{a\ |\ X_{a} \ge i/Q}$.
Namely, the principal aims to find the best eligible set among the set of $E_{\tau_i}$ for $i \in [Q]$.
If the discretization is dense enough with respect to the problem parameters, the regret bound here would imply a reasonable regret bound for our original problem. We define $f(\tau) = \Ex{x^{(t)}_{\br^{(t)}(E_\tau)}}$ as the expected principal utility for using threshold $\tau$.
Note that if the agent best responds, the expected utility from pulling arm $i$ becomes $f(i/Q)$.

We assume that $f(\tau)$ achieves its maximum for a unique $\tau^* \in (0,1)$ with $f''(\tau^*) < 0$, which is common in the literature \citep{kleinberg2003posted,amin2013learning, haghtalab2022learning}.
Now, we can simply use the well-known UCB upon the discretization, and obtain the following results against the myopic agent.
\begin{theorem}\label{thm:stoc_myop}
	If the agent is myopic, running UCB1 with discretization by $Q = (\frac{T}{\log{T}})^{1/4}$ has $\reg(T) = O(\sqrt{T\log{T}})$.
\end{theorem}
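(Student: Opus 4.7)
The plan is to view the principal's problem as a stochastic multi-armed bandit on the grid $\{\tau_i = i/Q\}_{i=1}^Q$ of candidate thresholds, and to decompose $\reg(T)$ into a discretization term plus the bandit regret of UCB1. Letting $\tau^*$ be the continuous optimizer of $f$ and $i^\star = \arg\max_i f(\tau_i)$, I would write
\begin{align*}
\reg(T) \;=\; T\bigl(f(\tau^*) - f(\tau_{i^\star})\bigr) \;+\; \mathbb{E}\Bigl[\sum_{t=1}^T \bigl(f(\tau_{i^\star}) - f(\tau_{I_t})\bigr)\Bigr],
\end{align*}
where $I_t$ is the arm played by UCB1 at round $t$. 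Since the agent is myopic, the reward the bandit observes at round $t$ from arm $i$ is an i.i.d.\ draw with mean $f(\tau_i)$ bounded in $[0,1]$, so the reduction to a classical stochastic MAB on $Q$ arms is immediate.

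For the discretization term, the standing assumption that $f$ has a unique interior maximum with $f''(\tau^*)<0$ gives the Taylor expansion $f(\tau^*) - f(\tau) = \Theta((\tau-\tau^*)^2)$ in a neighborhood of $\tau^*$. Since the grid spacing is $1/Q$, we have $|\tau_{i^\star}-\tau^*|\le 1/(2Q)$, hence $f(\tau^*) - f(\tau_{i^\star}) = O(1/Q^2)$, yielding a discretization contribution of $O(T/Q^2)$.

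For the bandit term, I would invoke the standard UCB1 analysis: each suboptimal arm $i$ is pulled in expectation at most $O(\log T / \Delta_i^2)$ times, where $\Delta_i := f(\tau_{i^\star}) - f(\tau_i)$, contributing $O(\log T/\Delta_i)$ to the regret. The crucial structural input is that the same local quadratic behavior of $f$ gives $\Delta_i \ge c\,(|i-i^\star|/Q)^2$ for some constant $c>0$ whenever $i$ lies in a constant-radius neighborhood of $i^\star$, while uniqueness of $\tau^*$ and continuity of $f$ force $\Delta_i \ge c'>0$ outside that neighborhood. Summing,
\begin{align*}
\sum_{k\ge 1}\frac{\log T}{c(k/Q)^2} \;=\; O\!\bigl(Q^2 \log T\bigr)
\end{align*}
from the near arms, plus $O(Q \log T)$ from the far arms, for a total bandit regret of $O(Q^2 \log T)$.

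Combining, $\reg(T) = O(T/Q^2 + Q^2 \log T)$, and balancing gives $Q^4 = T/\log T$, i.e.\ $Q = (T/\log T)^{1/4}$, which produces $\reg(T) = O(\sqrt{T\log T})$. The main obstacle is the two-sided bookkeeping in the bandit step: one must lift the local quadratic lower bound on $\Delta_i$ to a uniform global bound by splitting into Taylor-regime and far arms, and also check that arms with gaps so small that $\log T/\Delta_i^2 > T$ only contribute $T\Delta_i$ regret, which remains absorbed into the $O(\sqrt{T\log T})$ target for our choice of $Q$. The remaining pieces (UCB1's in-expectation pull bound and the Taylor expansion) are standard.
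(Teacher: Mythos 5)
Your proposal is correct and follows essentially the same route as the paper's proof: reduce to a stochastic bandit over the grid of thresholds, use the quadratic behavior of $f$ at its unique interior maximizer to bound the discretization error by $O(T/Q^2)$ and the sum of inverse gaps by $O(Q^2)$, apply UCB1's gap-dependent bound, and balance with $Q=(T/\log T)^{1/4}$. The bookkeeping issues you flag (gaps measured against the best grid arm rather than $\tau^*$, and arms with very small gaps) are handled in the paper by a global two-sided quadratic bound via compactness and by the $\sum_i \Delta_i$ term of the UCB1 bound, which matches your treatment.
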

Its analysis is a simple variant of \cite{kleinberg2003posted}, but we provide the entire proof to make paper self-contained.

Next, to deal with the strategic behavior of the agent, we again exploit the concept of delay to restrict the agent to be approximately best responding with a suitable choice of parameters.
In addition, however, we cannot simply expect that the outcome of pulling a single arm, \ie a specific eligible set, follows some stochastic distributions since the agent may strategically deviate from the best response at hand.
To cope with this additional challenge, we use the foundation of perturbed bandit instance by \cite{haghtalab2022learning}.

Since the stochastic setting is a generalization of the deterministic setting, it is obvious that we need a reasonable set of assumptions to obtain positive results.
Similar to the deterministic setting, we first assume that there exists a value $y_{min}>0$ such that for any realization of the agent's solutions, his utility for each solution is strictly greater than $y_{min}$. 
Secondly, we assume that the problem instance satisfies the following assumption, which is a stochastic version of Lipschitz continuity in the deterministic setting. 
\begin{definition}[Stochastic Lipschitz continuity]
	Under the stochastic setting, we say that the problem instance is stochastically Lipschitz-continuous with parameter $L_1 > 0$ if the ex-post utilities of the solutions are correlated in a sense that for any $a,b \in A-\{a_0\}$, we have $L_1 \cdot d_X(a,b) \le d_Y(a,b)$.
\end{definition}
Finally, our main result can be presented as follows.
\begin{theorem}\label{thm:stoc_stra}
	Under the two assumptions presented above, there exists an algorithm that has $\wreg(T)$ of
 \begin{align*}
      O\parans{\sqrt{T\log T} + T_\gamma \log\left(T_\gamma \max\left(\frac{T}{L_1},\frac{1}{\ymin}\right)\right)\log{T}},
 \end{align*}
 with $\gamma$-discouting strategic agent.
\end{theorem}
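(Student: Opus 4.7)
The plan is to combine three ingredients: the discretized UCB reduction used in Theorem~\ref{thm:stoc_myop}, the delay technique from Lemma~\ref{thm:haghtalab1}, and the stochastic-bandit-with-adversarial-perturbation analysis of \cite{lancewicki2021stochastic}. Concretely, I would discretize the threshold space into $Q=(T/\log T)^{1/4}$ arms, where arm $i\in[Q]$ corresponds to the eligible set $E^{>}_{i/Q}$, and run a $D$-delayed version of UCB1 on these arms with $D=\lceil T_\gamma \log(T_\gamma/\eps)\rceil$ for an appropriately chosen tolerance $\eps>0$. By Lemma~\ref{thm:haghtalab1}, the delay forces the $\gamma$-discounting strategic agent to play an $\eps$-best response at every round, which is the hook that allows me to model the resulting feedback as a stochastic bandit with a controlled perturbation.

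The key step is pinning down $\eps$ so that the induced bandit is well-behaved. Two requirements drive the choice: (i) $\eps<\ymin$ so that the agent's $\eps$-best response never retreats to the null action $\perp$ whenever the eligible set is non-empty (since any eligible solution has $Y_a>\ymin>\eps$, the null action cannot be in $\br_\eps^{(t)}$); and (ii) $\eps\le L_1/T$ so that the cumulative slack from the agent's deviation remains sublinear. Setting $\eps=\min(\ymin,\,L_1/T)$ yields $D=O\bigl(T_\gamma\log(T_\gamma\max(1/\ymin,\,T/L_1))\bigr)$. Next, using stochastic Lipschitz continuity, for any two solutions $a,a'$ with $|Y_a-Y_{a'}|\le\eps$ we have $|X_a-X_{a'}|\le\eps/L_1$. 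Hence the realized principal-utility under the agent's $\eps$-best response lies almost surely in the random interval $\bigl[X^{(t)}_{\br^{(t)}(E^{>}_{i/Q})}-\eps/L_1,\;X^{(t)}_{\br^{(t)}(E^{>}_{i/Q})}+\eps/L_1\bigr]$, which is the random perturbation interval alluded to in the contributions paragraph.

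I then feed this reward stream into the stochastic bandit framework of \cite{lancewicki2021stochastic}, which accommodates rewards that are i.i.d.\ draws from per-arm distributions corrupted by adversarially chosen per-round perturbations of bounded magnitude. This yields a UCB-style regret on the induced $Q$-armed problem of order $O\bigl(\sqrt{QT\log T} + (\eps/L_1)T + D\log T\bigr)$, where the last summand absorbs the delay-related bookkeeping (union bounds over the $O(\log T)$ UCB confidence regimes and the start-up replay rounds of a $D$-delayed mechanism). Adding the discretization bias $O(T/Q^2)=O(\sqrt{T\log T})$ inherited from the smoothness of $f$ near $\tau^*$ (as in the proof of Theorem~\ref{thm:stoc_myop}), and plugging in $\eps=\min(\ymin,\,L_1/T)$ so that $\eps T/L_1=O(1)$, gives exactly the stated bound.

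The main obstacle is rigorously casting the strategic agent's submissions as a stochastic bandit with bounded adversarial perturbations. The idealized reward $X^{(t)}_{\br^{(t)}(E)}$ is itself random through the realizations of the agent's solution set, and the agent's $\eps$-slack can depend on the entire past history, so the perturbation is adversarial while the baseline is stochastic. I would verify measurability of the perturbation with respect to the natural filtration and check that the concentration hypothesis of \cite{lancewicki2021stochastic} — that the empirical mean of the observed rewards concentrates around the true arm mean up to the perturbation bound — holds uniformly over the $Q$ arms despite this correlation. A secondary technical check is that $\eps<\ymin$ indeed prevents the agent from defaulting to $\perp$ under any policy in $\cP_\eps$, which requires treating rounds where the realized solution set happens to have no element in $E^{>}_{i/Q}$ separately (an empty eligible-set response serves as informative feedback for the bandit rather than a corrupted reward draw).
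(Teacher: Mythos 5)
Your overall architecture matches the paper's: discretize the threshold space into $Q=(T/\log T)^{1/4}$ arms, impose a delay $D=\lceil T_\gamma\log(T_\gamma/\eps)\rceil$ with $\eps=\min(\ymin, L_1/T)$ to force $\eps$-best responses, use $\ymin>\eps$ to rule out spurious null submissions, and use stochastic Lipschitz continuity to trap the observed reward in the random interval $[X_{\br(E_\tau)}\Ind{X_{\br(E_\tau)}\ge\tau}-\eps/L_1,\;X_{\br(E_\tau)}\Ind{X_{\br(E_\tau)}\ge\tau}+\eps/L_1]$, which is exactly the paper's $\delta$-perturbation construction with $\delta=L_1^{-1}\eps$, fed into the delayed-and-perturbed bandit machinery of Haghtalab et al.\ and Lancewicki et al.

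There is, however, one step that fails quantitatively: you bound the bandit regret on the $Q$-armed instance by the gap-independent $O(\sqrt{QT\log T})$. With $Q=(T/\log T)^{1/4}$ this is $T^{5/8}(\log T)^{3/8}=\omega(\sqrt{T\log T})$, so the claimed leading term does not follow; no choice of $Q$ balances $\sqrt{QT\log T}$ against the discretization bias $O(T/Q^2)$ to recover $\sqrt{T\log T}$. The paper instead uses the gap-dependent guarantee $O\bigl(\sum_{\Delta_i>0}\log T/\Delta_i+\delta T+D\log Q\bigr)$ and exploits the same smoothness assumption ($f$ maximized at a unique $\tau^*$ with $f''(\tau^*)<0$) that you already invoke for the discretization bias: it yields $C_1(\tau^*-\tau)^2<f(\tau^*)-f(\tau)$, hence $\sum_{\Delta_i>0}1/\Delta_i=O(Q^2/C_1)$, so the gap-dependent term is $O(Q^2\log T)=O(\sqrt{T\log T})$ at $Q=(T/\log T)^{1/4}$. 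You need this Kleinberg--Leighton-style inverse-gap-sum argument for the bandit term itself, not only for the discretization error. A secondary point: the delayed-plus-perturbed regret lemma the paper relies on is stated for successive elimination (Algorithm \textsc{SuccessiveEliminationDelayed}), not UCB1, so if you insist on UCB1 you would additionally have to establish that its confidence-interval analysis survives both the $D$-round delay and the $\delta$-perturbation; switching to the elimination algorithm sidesteps this.
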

Our proof essentially relies on a construction of proper random perturbation interval, followed by the regret analysis of delayed version of successive elimination algorithm by \cite{haghtalab2022learning} and \cite{lancewicki2021stochastic}.
The technical subtlety lies on introducing a proper random perturbation interval to convert it to the perturbed bandit instance.
The latter term including $\tgamma, L_1$ and $\ymin$ incurs due to the strategic behavior of the agent.
Still, this only contributes $\log T$ amount of regret once all these parameters are constants, which is dominated by the former term of $\sqrt{T \log T}$.

\section{Conclusion}
We study a novel \emph{repeated delegated choice} problem.
This is the first to study the online learning variant of delegated choice problem by \cite{armstrong2010model,kleinberg2018delegated,hajiaghayi2023multi}.
We thoroughly investigate two problem dimensions regarding whether the agent strategizes over the rounds or not, and whether the utility is stochastic or deterministic.
We obtain several regret upper bounds for each problem setting, along with corresponding lower bounds that complement the hardness of the problems and some assumptions therein.
Our analysis mainly characterizes the conditions of problem instances on which the principal can efficiently learn to delegate compared to the case when she knows the optimal delegation mechanism in hindsight, thereby providing fruitful insights in the principal's decision-making in delegation process.

\bibliographystyle{ACM-Reference-Format}
\bibliography{ref}

\newpage
\appendix
\section{More discussions on motivation}\label{app:moti}
As shown by the online labor market exampled presented in the introduction, repeated delegation between two parties with misaligned utility and information asymmetry happens in many real-world scenarios.
For example, consider a university (principal) bears the cost of hiring faculty members while delegating the role of selection to the department (agent).
The university and the department may have different goals in hiring faculty members, while the university may constrain feasible candidates of faculties in advance.
In addition, a content platform like Youtube (principal) decides which contents to recommend to incoming users while requiring Youtubers (agent) to register contents.
This can essentially be viewed as a process of delegation, since although Youtube may have registered the contents by itself, it delegates the role of registering contents to Youtubers due to a lack of time, money, or expertise.
Finally, one may consider an e-commerce (principal) decides which items to recommend to incoming buyers.
Typically, there exists a number of sellers (agent) who contract with the e-commerce, and register their items in the platform.
Thus, e-commerce delegates the role of actually creating items and deciding which ones to sell.
Note that in the examples above, different from the online labor market example we provided in the introduction, the platforms typically act as a principal and the individuals can be viewed as an agent.

\section{Connection to repeated posted pricing problem} \label{app:connection-to-rppm}

We found an intimate connection between the repeated posted price mechanism (RPPM) and our problem.
For example, offering a price in RPPM can be viewed as announcing a thresholded eligible set in our problem.
The acceptance can be regarded as a submission of eligible solution, and rejection as the agent declined to submit any solution.
However, the correspondence is not exact, since in RPPM there is only one value for the buyer w.r.t. to the item,
but in our problem there are multiple values w.r.t. multiple solutions that the agent might possess.
Therefore, while the only latent value that the principal needs to learn is the buyer's value for the item in RPPM,
in our setup,
the principal needs to learn the value of multiple solutions.
Finally, the principal does not know how many solutions the agent possesses in advance.
Besides, different from the fact that the principal's utility (revenue = price) is exactly negatively correlated to the agent's utility (valuation - price),
in our setting,
it can be arbitrarily correlated.

To provide a simple example of our analysis,
given two solutions that yields principal's ($x$) and agent's ($y$) utility of: $(x, y)=$(i) $(0.5, 0.5)$ and (ii) $(0.75, 0.75)$, there is no guarantee that the agent would pick (ii), although it strictly dominate (ii). This is because the agent may think suggesting (ii) would increase the principal's expectation on the further solutions, so that it prevents possibly inferior solutions to be eligible in the future. This becomes more complicated when there are multiple Pareto-optimal solutions w.r.t. the principal's and agent's utility.
Despite these complications, our analysis, however, reveals that one can construct efficient algorithms by incorporating existing techniques and some new ideas on how to deal with agent (especially in strategic case).


\section{Proof of Theorem~\ref{thm:revelation}}
\begin{proof}
    To begin with, suppose first that $M$ is a deterministic mechanism.
    Then, there exists a deterministic sequence $(a_t)_{t \in [T]} \in (A\cup\{\perp\})^T$ representing the (possibly null) arms selected at each round.
    Since the agent is fully capable of computing the exact payoff, 
    Denote the corresponding agent's actions sequence by $S^{(t)}_{t \in [T]}$.
    Without loss of generality, suppose that the realization of the $X_a^{(t)}$ for the problem instances with $M$ and $M'$ remains the same.
    Given $H_t$, let $M'$ be a single-proposal mechanism that announces an eligible set which consists of the selected solution (possibly null) by mechanism $M$ at round $t$.
    Consider the agent's policy $P'$ such that for each $t \in [T]$, given $H_t$ and $X_a^{(t)}$ for $a \in A$, he submits $a_t$ if $a_t \neq \perp$, and otherwise he submits nothing.
    Then, it follows from the above construction that $X_{M,P}^{(t)} = X_{M',P'}^{(t)}$ and $Y_{M,P}^{(t)} = Y_{M',P'}^{(t)}$, and thus we essentially have $\Phi_{M,P} = \Phi_{M',P'}$ and $\Psi_{M,P} = \Psi_{M',P'}$.

    Suppose that rewards of solutions realized to be $\om(t) = (\om_1(t),\ldots, \om_K(t))$ for each $t \in [T]$.
    Now consider a randomized policy of agent $P$ and (possibly) randomized mechanism $M$, where $P$ maximizes the agent's expected payoff.
    Let $\Sigma_P$ be the set of all possible random bits if $P$, and similarly define $\Sigma_M$ for $M$.
    For each $\sigma = (\sigma_P, \sigma_M) \in \Sigma = \Sigma_P \times \Sigma_M$, we can essentially compute the agent's expected payoff $y_{M,P(\sigma)}(\omega)$.
    Let $\sigma^*(\omega) = \argmax_{\sigma \in \Sigma}y_{M(\sigma_M),P(\sigma_P)}(\omega(t))$.
    Let $P'$ be the agent's deterministic policy which plays $P(\sigma^*(\omega(t)))$ against $M$ for each realization $\omega(t)$.
    Then, due to our construction, $P'$ yields a larger (or equal) agent's utility than $P$.
    If this yields a strictly larger utility, then the agent should have played $P'$, which is a contradiction.
    Thus, the agent's policy can be made deterministic, and by the argument presented above, we finish the proof.

\end{proof}

\section{Proof of Theorem~\ref{thm:deter_myop}}\label{sec:deter_myop}
\begin{proof}
    As we discussed in the main paper, the proof is essentially based on two observations (i) reduction to RPPM, and (ii) iterative algorithm.
    To this end, we first present the formal definition of RPPM.
    \begin{definition}[Repeated Posted Price Mechanism]
	In a \emph{repeated posted price mechanism} problem, given $T$ rounds, a buyer with an i.i.d. valuation $v_t$ from an unknown distribution supported on $[0,1]$ arrives at each round, and decides whether to buy an item.
	The seller posts a price $p_t$ on the item, and the buyer takes the item if $v_t \ge p_t$, and leaves otherwise.
	The seller's regret is defined by $\reg(T)  = T\cdot\max_{p \in [0,1]}p\cdot \Pr{v \ge p} - \Ex{\sum_{t=1}^T p_t\Ind{p_t \le v_t}}$, \ie the difference compared to the optimal revenue in hindsight with a fixed price.
    \end{definition}
    Now we prove the following claim.
    \begin{claim}
        Suppose that the agent is myopic. 
        Given a mechanism for the RPPM problem that achieves regret of at most $R(T)$ on any instance with $T$ rounds, we can construct a mechanism $M$ for the repeated delegated choice achieving the same regret bound $R(T)$ for all instances with $T$ rounds. 
    \end{claim}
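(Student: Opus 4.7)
The plan is to exhibit a black-box reduction from our deterministic-myopic setting to RPPM by working with threshold eligible sets and translating the single bit of RPPM feedback (accept/reject) into the announce-and-observe loop of the delegation mechanism. The first observation is that in the deterministic setting the benchmark \eqref{eq:opt_single} collapses nicely: writing $v := \max_{a \in A} X_a$, announcing the threshold eligible set $E_v = \{a : X_a \ge v\}$ forces every myopic-best-response solution to have $X_a = v$, and no eligible set can yield more than $v$, so $\opt = v$. This is exactly the per-round benchmark $\max_p p\cdot \mathbf{1}[v \ge p] = v$ of RPPM applied to a buyer with fixed valuation $v$.

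Concretely, given an RPPM algorithm $\mathcal{A}$ with regret bound $R(T)$, I construct $M$ as follows. In round $t$, query $\mathcal{A}$ for a price $p_t \in [0,1]$ and announce the threshold eligible set $E_{p_t}$ (a single-proposal mechanism, which suffices by Theorem~\ref{thm:revelation}). Because the agent is myopic, he submits a solution $a_t \in E_{p_t}$ maximizing $Y_{a_t}$ whenever $E_{p_t}$ is nonempty (with ties broken for the principal), and submits $\perp$ otherwise. Feed $\mathcal{A}$ the bit "accept" whenever any solution is submitted and "reject" otherwise, so that $\mathcal{A}$ sees exactly the transcript it would see against an i.i.d. buyer with valuation $v$.

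For the regret comparison, I argue round-by-round. If $p_t \le v$, the agent submits some $a_t$ with $X_{a_t} \ge p_t$, so the delegation instantaneous regret is $v - X_{a_t} \le v - p_t$, matching the RPPM instantaneous regret $v - p_t$. If $p_t > v$, then $E_{p_t}$ is empty, the agent submits $\perp$, and both settings incur instantaneous regret $v$. Summing over $t$ and taking expectations gives $\reg_M(T) \le \reg_\mathcal{A}(T) \le R(T)$ on the induced instance with deterministic valuation $v$.

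The only subtle point, and the step I expect to double-check most carefully, is the faithfulness of the simulation: I need to make sure that (i) the agent's myopic choice inside $E_{p_t}$ is well-defined and always yields a principal utility at least $p_t$ (handled by the tie-breaking rule stated after the definition of $\br^{(t)}$, together with the inclusion of $\perp$ in $A_E$ so the agent never prefers an ineligible solution), and (ii) translating a single accept bit back to $\mathcal{A}$ is legitimate even though in our problem we additionally observe the exact value $X_{a_t}$; since extra feedback can only help, any regret bound $R(T)$ that $\mathcal{A}$ achieves on the implied instance is preserved. No properties of $\mathcal{A}$ beyond its regret bound are used, so instantiating it with the Kleinberg--Leighton mechanism yields the $O(\log\log T)$ half of Theorem~\ref{thm:deter_myop}, which combined with the straightforward $O(K)$ iterative algorithm completes the proof.
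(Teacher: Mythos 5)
Your proposal is correct and follows essentially the same route as the paper's proof: simulate the RPPM algorithm by announcing threshold eligible sets $E_{p_t}$, identify the induced RPPM instance with buyer valuation $v=\max_a X_a$, observe that the accept/reject feedback coincides in the two problems, and compare per-round utilities against the common benchmark $v$ to conclude that the delegation regret is dominated by the RPPM regret. The only cosmetic difference is that you argue the regret comparison round-by-round while the paper phrases it as an induction on the shared transcript; the content is the same.
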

    \begin{proof}[Proof of the claim]
        
        Let $M$ be a mechanism for the RPPM problem. We construct a threshold-based mechanism $M'$ for the repeated delegated choice problem that simulates $M$. At each round, the mechanism $M$ chooses a price based on the history of the buyer's responses. In the RPPM problem, this history consists of the prices used at each round and whether the buyer bought the item or not. For $M'$, we consider the history to be the threshold $\tau$ used in each round which defines the eligible set $E_\tau$ and whether the agent proposes an eligible solution or not. Now, at each round, we can consider the history for $M'$ to be a history for $M$. If $M$ would propose price $p$ with the given history, $M'$ announces a threshold $\tau=p$. 

        Now, consider an instance $I'$ of the repeated delegated choice problem  with $T$ rounds. In the deterministic case, each random variable $X_a^{(t)}$ takes only a single value, and so we abuse this notation to refer to the value rather than the random variable. Let $X^{(t)}=\max_{a\in A} X_a^{(t)}$ denote the maximum value of the principal's utility among the agent's solutions. We define instance $I$ of the RPPM problem, where the buyer's value is $X^{(t)}$ and there are $T$ rounds. We show that the regret of $M'$ on $I'$ is at most that of $M$ on $I$. We prove inductively that at each round, the price $M$ proposes given $I$ is the same as the threshold announced by $M'$ for $I'$, the agent and the buyer's responses are the same which leads to the same history, and the principal utility achieved by $M'$ is at least as high as the seller utility in $M$. In the first round, there is no history and in the other rounds, the history will be the same by induction. So by definition, $M'$ uses the price $\tau$ used in $M$ as the threshold. In $I$, the buyer buys if and only if $X^{(t)}$ is at least $\tau$. Similarly, in $I'$ the agent proposes an eligible solution with principal utility at least $\tau$ if and only if such a solution exists as this would result in a positive utility for the agent as opposed to the zero utility for submitting an ineligible solution or no solution. So, the agent and buyer responses are the same. In addition, if $X^{(t)}\geq \tau$ the seller's utility is the price $\tau$, and the principal's utility is at least $\tau$ as an eligible solution is submitted. If $X^{(t)}\leq \tau$, the buyer does not buy and the agent can not propose an eligible solution, so both the seller and the principal's utility will be $0$. So, the principal's utility when using mechanism $M'$ in each round is at least as high as the seller's using $M$.

        The regret in both cases is defined as the difference between the utility achieved and the optimal utility, which is $T\cdot X^{(t)}$ in both cases. Therefore, the regret of $M'$ on $I'$ is at most the regret of $M$ on $I$. So, if $M$ has regret at most $R(T)$ on instances with $T$ rounds, $M'$ will also achieve the same regret bound.

    \end{proof}

    Now, the regret upper bound of $\log \log(T)$ can directly be obtained from the following theorem. 
    \begin{theorem}[\cite{kleinberg2003posted}]
        There exists an algorithm achieving regret $O(\log \log T)$ for the RPPM problem in the deterministic myopic setting.
    \end{theorem}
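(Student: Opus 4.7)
The plan is to reproduce Kleinberg and Leighton's doubly-exponential search. In the deterministic myopic setting the buyer's value $v \in [0,1]$ is a fixed unknown scalar, so the optimal-in-hindsight fixed price is $v$ itself, yielding per-round benchmark revenue $v$. Posting $p_t$ yields per-round regret $v - p_t$ on acceptances ($p_t \le v$) and $v$ on rejections ($p_t > v$); thus the task reduces to localizing $v$ via threshold queries under a heavy feedback asymmetry, in which acceptances are cheap (pay only the gap $v-p_t$) but rejections are expensive (pay the full benchmark $v$).

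The algorithm maintains a confidence interval $[L_k, R_k] \ni v$, initialized to $[0,1]$, with widths $w_k$ shrinking double-exponentially, namely $w_k \asymp 2^{-2^k}$. In phase $k$ it sweeps prices $L_k + \delta_k, L_k + 2\delta_k, \ldots$ with step $\delta_k = w_{k+1}$, stopping at the first rejection (or upon reaching $R_k$); the new interval $[L_{k+1}, R_{k+1}]$ is the pair of adjacent grid points bracketing $v$, so its width is exactly $\delta_k = w_{k+1}$. After $K = \lceil \log_2 \log_2 T \rceil$ phases the interval width drops to $O(1/T)$, and the algorithm exploits by posting $L_K$ for all remaining rounds.

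The regret bookkeeping proceeds phase by phase. In phase $k$ each accepted round contributes at most $v - p_t \le w_k$, and the number of accepted rounds is at most $w_k/\delta_k$, so their total contribution is bounded by $w_k^2/\delta_k$. With $w_k = 2^{-2^k}$ and $\delta_k = 2^{-2^{k+1}}$, this equals $2^{2 \cdot 2^k - 2^{k+1}} = 1$. The lone rejection contributes at most $R_k \le 1$. Hence each phase costs $O(1)$ regret, and since there are $O(\log \log T)$ phases, the total exploration regret is $O(\log \log T)$. The exploitation tail loses at most $w_K \le 1/T$ per round, for a total of $O(1)$. Summing gives the desired $O(\log \log T)$ bound. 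One must also check that $\sum_{k=0}^{K} w_k/\delta_k = \sum_k 2^{2^k} = \Theta(2^{2^K}) = \Theta(T)$ rounds suffice for the exploration; the schedule is calibrated so the horizon is exactly large enough.

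The main obstacle is selecting the shrinkage schedule. An additive sweep inside an interval of length $w$ with step $\delta$ costs $\Theta(w^2/\delta)$ on accepted rounds, so a schedule that shrinks too slowly (for example, a geometric one like $w_k = 2^{-k}$, corresponding to standard binary search) already yields $\Omega(\log T)$ regret, while a schedule that shrinks too aggressively cannot afford enough queries per phase to actually localize $v$. The double-exponential choice $w_k \asymp 2^{-2^k}$ is precisely the balance point: the per-phase regret $w_k^2/\delta_k$ collapses to a constant, the per-phase query count $w_k/\delta_k = 2^{2^k}$ fits within the horizon after $\log \log T$ phases, and the final width is $O(1/T)$. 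Verifying these identities and handling the boundary cases (a sweep that reaches $R_k$ without ever being rejected, and truncating the last phase to fit inside $T$ rounds) are the only remaining technical points.
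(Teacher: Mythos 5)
The paper states this result as a black-box import from Kleinberg and Leighton (2003) and gives no proof of its own, so there is nothing internal to compare against; your reconstruction of their doubly-exponential search is the standard argument for the identical-valuations case and is correct. The phase accounting checks out: with $w_k = 2^{-2^k}$ and step $\delta_k = w_{k+1} = w_k^2$, each phase incurs $w_k^2/\delta_k = 1$ from acceptances plus at most $1$ from the single rejection, $O(\log\log T)$ phases reach width $1/T$, and the bound is robust to truncating the final phase since every phase costs $O(1)$ regardless of where it is cut off.
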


    \begin{algorithm}
        \SetAlgoLined
        $\tau = 0$, $f=0$\\
        \While{$f \neq 1$}
        {
            Announce $E^{>}_{\tau}$ for a single round. \\
            \leIf{arm $a$ is proposed by the agent}{
                $\tau \gets X_a$
            }
            {break}
        }
        Announce $E_{\tau}$ for remaining rounds.
    \caption{\textsc{IterativeSearch}}
    \label{alg:0}
    \end{algorithm}
    
    In order to obtain the regret upper bound of $K$, we present Algorithm~\ref{alg:0} which iteratively improves the solution.
    Note that $E^{>}_{\tau}$ denotes the set of eligible arms with principal's utility greater than $\tau$, \ie $E^{>}_{\tau} := \{a\ |\ X_a > \tau \}$, and $E_{\tau}$ denotes the set of eligible arms with principal's utility at least $\tau$, \ie $E_{\tau} := \{a\ |\ X_a \geq \tau \}$.
    
    Due to the myopic and deterministic nature of the agent, it is obvious that Algorithm~\ref{alg:0} first finds the best arm in terms of the agent.
    Then, the eligible set is set to preclude that arm, and thus the agent will submit another arm which yields larger principal's utility.
    This process terminates at most after $K$ rounds since it excludes at least $1$ arm at each round.
    Thus, its regret is at most $O(K)$.
    Combining the above observations, we can essentially compare $K$ and $\log \log T$, and determine which algorithms to exploit, and it completes the proof.
\end{proof}

\section{Proof of Theorem~\ref{thm:alg1}}\label{sec:alg1}
\begin{algorithm}
    \SetAlgoLined
    $\tau \gets 0$, $D \gets \tgamma \log{\frac{\tgamma}{\ymin}}$, $f \gets 0$; \\
    \While{$f \neq 1$}
    {
        Announce $E^{>}_{\tau}$ for a single round. \\
        \leIf{solution $a$ is proposed by the agent}{
            $\tau' \gets X_a$
        }
        {$f \gets 1$}
        Announce $E^{>}_{\tau}$ for $D$ rounds. \\
        \lIf{$f \neq 1$}{$\tau \gets \tau'$}
    }
    Announce $E_{\tau}$ for remaining rounds.
\caption{\textsc{DelayedIterativeSearch}}\label{alg:1}
\end{algorithm}

\begin{proof}[Proof of Theorem~\ref{thm:alg1}]
    The proof relies on the following lemma.
    
    \begin{lemma}\label{lem:myopic_agent}
        In a $D$-delayed mechanism where $D = \tgamma \log{\frac{\tgamma}{\ymin}}$,
        when the principal announces $E^{>}_{\tau}$, 
        the agent submits a solution if and only if solution $a$ exists
        such that $X_a > \tau$.
    \end{lemma}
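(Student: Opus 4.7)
The plan is to invoke Theorem~\ref{thm:haghtalab1} with the specific choice $\eps = \ymin$, which produces precisely the delay $D = \tgamma \log(\tgamma/\ymin)$ prescribed in the lemma. By that theorem, in the $D$-delayed mechanism the worst-case Stackelberg regret of the strategic agent is dominated by that of an $\eps$-best-responding agent; so for the regret analysis of Algorithm~\ref{alg:1} (which is the only place this lemma is used) we may treat the agent at each round as playing some $a \in \br_\eps^{(t)}(E^{>}_\tau)$ with $\eps=\ymin$. The content of the lemma is then a short structural statement about this $\ymin$-best response set when $E = E^{>}_\tau$.

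For the "if" direction, suppose there exists $a \in A$ with $X_a > \tau$, so $a$ is eligible and hence $a \in A_E$. The uniform lower bound assumption gives $Y_a > \ymin$ strictly. For $\perp$ to belong to $\br_\eps^{(t)}(E^{>}_\tau)$ we would need $y_\perp = 0 \ge y_a - \eps = Y_a - \ymin$, but the right-hand side is strictly positive, a contradiction. Hence every element of the $\ymin$-best response set is a non-null, therefore eligible, solution, and the principal accepts it. For the "only if" direction, if no $a$ satisfies $X_a > \tau$ then no solution is eligible under $E^{>}_\tau$, so $A_E = \{\perp\}$; submitting any non-eligible solution is rejected and yields the same zero utility as $\perp$, so the outcome is that nothing is proposed.

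The main obstacle, or really the only non-cosmetic step, is the reduction at the start: one must justify that it is legitimate to replace the strategic agent with an $\ymin$-best responder for the purpose of this argument. This is exactly what Theorem~\ref{thm:haghtalab1} provides, and the strictness of the lower bound $Y_a > \ymin$ (as opposed to $\ge \ymin$) is precisely what makes the $\eps = \ymin$ slack insufficient to justify $\perp$ as an $\eps$-best response when an eligible solution exists. Once these two points are in hand, the case analysis above gives both directions immediately.
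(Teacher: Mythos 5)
Your proof is correct and follows essentially the same route as the paper: invoke the delay lemma with $\eps = \ymin$ to reduce to an $\ymin$-best-responding agent, then observe that any eligible solution yields utility strictly greater than $\ymin$ while proposing nothing yields $0$, so $\perp$ cannot be an $\ymin$-best response when an eligible solution exists. Your write-up is in fact slightly more careful than the paper's on two points the paper leaves implicit --- that the reduction is a worst-case-regret domination rather than a literal behavioral guarantee, and that the strictness of $Y_a > \ymin$ is what rules out the null response --- but the argument is the same.
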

    \begin{proof}
        When the mechanism is $D$-delayed, then the agent plays his $\eps$ best response.
        Given $E^{>}_{\tau}$, if the agent proposes an eligible solution, then he obtains the utility of more than $\ymin$.
        On the other hand, if the agent does not propose any solutions, he gets the utility of $0$. 
        As $\eps = \ymin$, the $\eps$-best responding agent proposes a solution if he has at least one eligible solution.
    \end{proof}
    The above lemma indicates that by announcing $E^{>}_{\tau}$, the principal can check whether there exists solution $a$ such that $X_a > \tau$ or not.
    We consider algorithm~\ref{alg:1}.
    In this algorithm, when the principal observes solution $a$,
    she announces $E^{>}_{X_a}$ and due to Lemma~\ref{lem:myopic_agent}, checks whether there exists a better solution or not.
    By observing a better solution, the principal does the same thing and keeps observing better solutions.
    This process continues until the principal observes the best solutions.
    As a result, after at most $K$ iteration of the algorithm, the principal observes the best solution.
    Each iteration of the above algorithm, takes $D$ rounds, hence, total regret is bounded by $KD$ where $D = \tgamma \log{\frac{\tgamma}{\ymin}}$.
\end{proof}

\section{Proof of Theorem~\ref{thm:alg2_regret}}\label{sec:alg2}
\begin{algorithm}

    \SetAlgoLined
    $D \gets \tgamma \log{\frac{\tgamma}{\ymin}}$, $l \gets 0$, $r \gets 1$\\
    \While{$r-l > \frac{1}{T}$}
    {
        $\tau \gets \frac{l+r}{2}$; \\
        Announce $E^{>}_{\tau}$ for a single round. \\
        \leIf{solution $a$ is proposed by the agent}{
            $l \gets \tau$
        }{
            $r \gets \tau$
        }
        Announce $E_{l}$ for $D$ rounds. \\
    }
    Announce $E_{l}$ for remaining rounds. \\
\caption{\textsc{DelayedBinarySearch}}\label{alg:2}
\end{algorithm}

\begin{proof}[Proof of Theorem~\ref{thm:alg2_regret}]
    The proof is based on Algorithm~\ref{alg:2}.
    We claim that the optimal $X_{a^*}$ always lies in the interval $[l, r]$.
    When $l=0$ and $r=1$, it is obvious that the optimal solution lies in this interval.
    Then on each iteration, principal puts $\tau = \frac{l + r}{2}$ and announces $E^{>}_{\tau}$.
    According to Lemma~\ref{lem:myopic_agent}, if the agent has solution $a$ such that $X_a > \tau$, 
    the principal observes that solution and should find the optimal solution in the interval $[\tau, r]$.
    On the other hand, there do not exist any solutions in the interval $[\tau, r]$
    and the algorithm continues searching for the optimal solution in $[l, \tau]$.
    In this type of rounds, regret is at most $1$.
    
    The algorithm stops shrinking the interval until its length gets smaller than $\frac{1}{T}$. 
    After that the principal announces $E_{l}$ and as the optimal solution lies in the interval $[l, r]$, 
    regret is at most $r - l=\frac{1}{T}$.
    
    This algorithm needs $O(\log{T})$ iterations and on each iteration announces $O(D)$ eligible sets.
    On those rounds, regret is at most $r-l$. as $r-l$ is shrunk by a multiplicative factor of $2$, 
    total regret in these rounds will be at most
    \begin{align*}
        D + \frac{D}{2} + \frac{D}{4} + \frac{D}{8} + ... \leq 2D
    \end{align*}
    
    After $\log{T}$ iterations, regret is at most $\frac{1}{T}$. Hence,
    total regret is $O(\tgamma \log{\frac{\tgamma}{\ymin}} + \log{T})$.
\end{proof}

\section{Proof of Theorems~\ref{thm:linear_regret} and {\ref{thm:dense-lw}}}\label{sec:thm:linear_regret}

After establishing the proof for Theorem~\ref{thm:linear_regret}, it becomes straightforward to infer Theorem~\ref{thm:dense-lw}.
Thus, we start with Theorem~\ref{thm:linear_regret}.

\begin{proof}[Proof of Theorem~\ref{thm:linear_regret}]
    We consider two problem instances and claim that in at least one of them, the principal obtains a linear regret.
    We use $P_1$ and $P_2$ to refer to those instances.
    We assume that in $P_1$, there are two deterministic solutions $a_1$ and $a_2$ with utility of 
    $(X_{a_1}, Y_{a_1}) = (d, 1)$ and $(X_{a_2}, Y_{a_2}) = (2d, y)$.
    In $P_2$ the only solution is $a_1$.
    We assume that in both instances, the principal has already observed $a_1$.
    In problem $P_1$, the principal has to find $a_2$ as her regret is $d$ for each round of not receiving $a_2$.
    However, in problem $P_2$, $a_1$ is the optimal solution and the principal wastes her time by looking for better solutions.

    We show that under the setting where $y \leq e^{-T}$, the principal's regret in at least one of $P_1$ and $P_2$ will be $\Omega(Td)$.
    Let $C = \{c_1, c_2, ..., c_K\}$ be the rounds in instance $P_2$ such that the principal looks for a better solution, i.e., $a_2$. 
    In those rounds, she announces an eligible set such that $a_1$ is no longer eligible, thus, 
    in $P_2$ she receives no solutions and has the regret of $d$.
    Thus, her regret in $P_2$ is $\Omega(Kd)$ because, in $K$ rounds, she receives no solutions. This implies that
    in order to get sublinear regret, $K \in o(T)$.

    We note that in problem $P_1$, the principal needs to find $a_2$ in sublinear rounds.
    Indeed, if the principal observes $a_2$ after $l$ rounds, then her regret for previous rounds is at least $d$,
    hence, her total regret is $\Omega(ld)$, which implies that for sublinear regret $l$ should be sublinear, i.e., $l \in o(T)$.

    Now we consider the agent's response in $P_1$.
    We note that as long as the agent does not reveal $a_2$, the principal cannot distinguish between $P_1$ and $P_2$.
    So, her strategy is the same in both problem instances.
    In the case that the agent decides not to hide $a_2$, 
    denote by $j$ the first round that the agent's behavior is different in $P_1$ and $P_2$.
    In fact, in round $j$ of instance $P_1$, the agent proposes $a_2$.
    We note that according to the argument we mentioned above, $j \in O(l)$. This implies that $j \in o(T)$.
    After that round, as the principal has already observed $a_2$,
    she never announces an eligible set such that $a_1$ is accepted.
    Thus, agent's utility starting from round $j$ will be at most $\gamma^j y \sum_{t=j}^{T-1} \gamma^t \leq \gamma^j \frac{y}{1-\gamma}$.

    If the agent decides to hide $a_2$, then he does not propose it in round $j$ and the principal's strategy will be the same as her strategy for $P_2$.
    As the principal does not accept $a_1$ only for $K$ rounds, then in at least one of the rounds $j, j+1, ..., j+K$,
    principal accepts $a_1$ and the agent gets the utility of at least $\gamma^{j+K}$.
    We note that as both $j, K \in o(T)$, then all rounds $j, j+1, ..., j+K$ exist.

    Since the agent wants to maximize his utility, if he decides to propose $a_2$ in round $j$,
    then his maximum possible utility should be at least his minimum utility in the scenario he hides that solution.
    Formally,

    \begin{align*}
        \gamma^j \frac{y}{1-\gamma} \geq \gamma^{j+K}.
    \end{align*}

    This implies that $\frac{y}{1-\gamma} \geq \gamma^K$.
    By assuming the fact that $y \leq e^{-T}$, then we have
    \begin{align*}
        \frac{e^{-T}}{1-\gamma} \geq \gamma^K.
    \end{align*}
    By taking $\log$ of both sides, we get 
    \begin{align*}
        K\log{\gamma} \leq -T -\log{(1-\gamma)}.
    \end{align*}
    Finally, $K \geq -\frac{T}{\log{\gamma}} - \frac{\log{(1-\gamma)}}{\gamma}$.
    Hence, $K \in \Omega(T)$. 
    This implies that the principal needs to spend $\Omega(T)$ rounds to check whether a better solution exists or not, which is a contradiction
    with the previous argument that $K \in o(T)$.
\end{proof}

\begin{proof}[Proof of Theorem~\ref{thm:dense-lw}]
We note that the above instances $P_1$ and $P_2$ satisfy $d$-dense assumption.
This implies that we showed even under $d$-dense assumption when $\ymin \leq e^{-T}$, then there the principal suffers from $\Omega(Td)$ regret.
\end{proof}
\section{Proof of Theorem~\ref{thm:lips-lns-reg}}\label{sec:apd:3.8}
\begin{proof}
    We obtain an upper bound on the regret for a more generalized version of the algorithm~\ref{alg:5}. In this algorithm, we assume that there exists $\beta \geq 2$ such that 
    $L1 - \frac{\beta + 2}{2\beta} L_2 > 0$.

    \begin{algorithm}
    \SetAlgoLined
    \While{any solution has not been received}{
        Announce $E_0$.\\
    }
    Let $a_0$ be the proposed solution. \\
    $\alpha \gets L_1 - \frac{\beta+2}{2\beta}L_2$, $l \gets X_{a_0}$, $y \gets Y_{a_0}$; \\
    $r \gets \min\{1, l + \frac{y}{L_1}\}$; \\
    \While{$r-l > \beta d$}
    {
        $\tau \gets \frac{l+r}{2}$, $\eps \gets \alpha (r-l)$, $D = \tgamma \log{\frac{\tgamma}{\eps}}$; \\
        Announce $E^{>}_{\tau}$. \\
        \leIf{solution $a$ is proposed by the agent}{
            $l \gets X_a$, $y \gets Y_a$
        }
        {
            $r \gets \tau$
        }
        Announce $E_{l}$ for $D$ rounds. \\
        $r \gets \min\{r, l + \frac{y}{L_1}\}$; \\
    }
    Announce $E_l$ for remaining rounds.
    \caption{\textsc{DelayedProgeressiveSearch}}\label{alg:5}
    \end{algorithm}
        
    At first, the algorithm keeps accepting all solutions until the agent proposes any of them.
    it is obvious that there is no merit for the agent to not propose any solution at all. In fact, even after a single round of accepting all solutions, the agent proposes $a_0$ and we can initialize corresponding variables in our algorithm.
    
    We claim that the optimal $X_{a^*}$ always lies in the interval $[l, r]$.
    At first, as $l=0$ and $r=1$, this interval contains all solutions. 
    On each iteration, we assume that there exists solution $a_0$ such that $l = X_{a_0}$. 
    As long as $r-l > \beta d$, the algorithm shrinks the interval.
    let $p$ to be $\frac{r-l}{2}$, thus, $p \geq \frac{\beta}{2}d$.
    it picks $\tau = \frac{l+r}{2}$ and aims to check whether
    there exists solution $a$ such that $X_a \geq \tau$ or not.
    if it can guarantee there exists a solution in the interval $[\tau, r]$,
    the principal continues searching in that interval, 
    otherwise, she continues searching in the interval $[l, \tau]$.
    
    According to $d$-dense assumption, if there exists a solution $a$ such that $X_a \geq \tau$,
    then there also exists a solution $a'$ such that $\tau \leq X_{a'} \leq \tau + d$. 
    We note that $\tau + d \leq r$ since $\beta \geq 2$.
    As $X_{a'} \leq \tau + d$,
    according to Lipschitz continuity $d_Y(a', a_0) \leq L_2 \cdot (X_{a'} - X_{a_0})$, i.e., 
    \begin{align*}
        Y_{a_0} - Y_{a'} 
        &\leq L_2 \cdot (X_{a'} - X_{a_0})\\
        &\leq L_2 \cdot (\tau + d - l)\\ 
        &\leq L_2 \cdot (p + d)\\
        &\leq L_2 \cdot \left(\frac{2+\beta}{\beta} p\right)
    \end{align*}
    This implies that 
    \begin{align}
        Y_{a'} \geq Y_{a_0} - L_2 \cdot (\frac{2+\beta}{\beta} p).
        \label{eq:1684260763}
    \end{align}

    Furthermore, we can always assume that $r-l$ is bounded.
    This is due to the fact that if there is a solution $a''$ with $X_{a''} \geq r$, 
    then according to Lipschitz continuity, $d_Y(a'', a_0) \geq L_1 \cdot (X_{a''} - X_{a_0})$, i.e., 
    \begin{align*}
        |Y_{a_0} - Y_{a''}| 
        &\geq 
        L_1 \cdot (X_{a''} - X_{a_0}) 
        \\
        &\geq 
        L_1 \cdot (r - l)
        \\
        &\geq 
        2p \cdot L_1.
    \end{align*}
    This implies that either $Y_{a''} \leq Y_{a_0} - 2p L_1$ or $Y_{a''} \geq Y_{a_0} + 2p L_1$.

    We first show that $Y_{a''} \geq Y_a + 2p L_1$ is not possible.
    As we will later show, we always put $\epsilon = \alpha \beta d$
    where $\alpha = L_1 - \frac{\beta + 2}{\beta}L_2$.
    Thus, the agent always plays $\alpha \beta d$ best response.
    In this case, $X_{a''} > X_{a_0}$ and $Y_{a''} - Y_{a_0} \geq 2p L_1 \geq L_1 \beta d > \alpha \beta d$.
    This implies that $\alpha \beta d$ best-responding agent
    should not propose $a_0$ as $a''$ gives him at least $\alpha \beta d$ more utility and
    whenever solution $a_0$ is eligible, so does $a''$.

    Therefore, we focus on case where $Y_{a''} \leq Y_{a_0} - 2p L_1$. 
    if $Y_{a_0} - 2p L_1 < 0$, then there cannot be any solution with $X_{a''} \geq r$,
    thus, we can shorten the interval, \ie we can decrease $r$.
    In fact,
    \begin{align}
        Y_{a_0} - 2p L_1 \geq 0,
        \label{eq:1684260948}
    \end{align}
    indicating that $r$ should not exceed $l + \frac{Y_{a_0}}{L_1}$. Consequently, in each iteration, we verify if $r$ is smaller than $l + \frac{Y_{a_0}}{L_1}$. If it surpasses this value, we simply update it to $l + \frac{Y_{a_0}}{L_1}$.

    Now, we use this lower bound on $Y_{a_0}$ to find a lower bound on $Y_{a'}$.
    We define $\alpha > 0$ to be $\alpha := L_1 - \frac{\beta + 2}{\beta}L_2$.
    
    \begin{align*}
        Y_{a'} &\geq Y_{a_0} - \frac{\beta + 2}{\beta}p \cdot L_2 \tag{\ref{eq:1684260763}} \\
        &\geq 2p \cdot L_1 - \frac{\beta + 2}{\beta}p \cdot L_2 \tag{\ref{eq:1684260948}}\\
        &\geq 2p \left(L_1 - \frac{\beta + 2}{2\beta}L_2\right) \\
        &\geq \alpha \beta d. \tag{definition of $\alpha$}
    \end{align*}

    The above lower bound shows that if there exists a solution in the interval $[\tau, r]$,
    then there exists solution $a'$ such that the agent's utility for that is at least $\alpha \beta d$.
    We put $\eps$ to be $\alpha \beta d$ and $D$ to be $\tgamma \log{\frac{\tgamma}{\eps}}$, 
    Then the $D$-delayed mechanism makes the agent play his $\eps$ best response. 
    Hence, by announcing $E^{>}_{\tau}$, the agent will propose a solution if there exists a solution in interval $[\tau, r]$.

    After observing his proposal, the algorithm updates the interval correspondingly.
    If $a$ is proposed by the agent ($X_a \geq \tau$), then it keeps searching for better solutions in the interval $[X_a, r]$.
    We note that for the next iteration, we maintain the property $l = X_{a}$.
    On the other hand, if the agent does not propose any solutions, the principal has to search for better solutions in the interval $[l, \tau]$. Similarly, note that $l=X_{a_0}$ in the next iteration. in this type of rounds, regret is at most $1$.
    
    We keep shrinking this interval until its length gets smaller than $\beta d$ ($\beta \geq 2$).
    At this point, we keep announcing $E_{l}$ and solution $a_0$ is eligible, so the principal gets the utility of at least $l$
    and the regret is at most $r-l \leq \beta d$.

    As in each iteration, the length of the interval becomes at most half of the previous interval, 
    number of iterations is $O(\log{\frac{1}{\beta d}})$.
    On each iteration, we run $O(D)$ rounds where
    \begin{align*}
        D = \tgamma \log{\frac{\tgamma}{\alpha \beta d}}.
    \end{align*}
    The regret of the algorithm on those rounds is at most $r-l$.
    As $r-l$ is shrunk by at least a multiplicative factor $2$, then the total regret in these 
    rounds as at most
    \begin{align*}
        D + \frac{D}{2} + \frac{D}{4} + ... \leq 2D.
    \end{align*}
    Also, after stopping exploration, the regret is at most $\beta d$. So the total regret is 
    $O(\tgamma \log{\frac{\tgamma}{\alpha \beta d}} +  \log{\frac{1}{\beta d}} + \beta d T)$.
    By plugging $\beta = 4$, we get the bound provided in the Theorem~\ref{thm:lips-lns-reg}.
\end{proof}

\section{Proof of Theorem~\ref{thm:stoc_myop}}\label{sec:apd:I}
\begin{proof}
	Our proof is essentially based on the following series of lemmas.

	\begin{lemma}\label{lm:klein1}
		There exist constants $C_1,C_2$ such that $C_1(\tau^* - \tau)^2 < f(\tau^*) - f(\tau) < C_2(\tau^* - \tau)^2$.
	\end{lemma}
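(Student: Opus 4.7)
The plan is to exploit the second-order Taylor expansion of $f$ around the interior maximizer $\tau^*$. Since $\tau^*$ is an interior maximizer of $f$ on $[0,1]$, we have $f'(\tau^*) = 0$; by assumption $f''(\tau^*) < 0$, and for this second derivative to even be well-defined the paper is implicitly assuming $f$ is $C^2$ in a neighborhood of $\tau^*$ (mirroring the standing regularity in Kleinberg--Leighton). Continuity of $f''$ around $\tau^*$ then yields some $\delta > 0$ and constants $0 < m \le M$ such that $-M \le f''(\tau) \le -m$ for every $\tau \in [\tau^* - \delta,\, \tau^* + \delta]$.

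For any $\tau$ inside this neighborhood, Taylor's theorem with the Lagrange form of the remainder gives
\[
f(\tau) \;=\; f(\tau^*) + f'(\tau^*)(\tau - \tau^*) + \tfrac{1}{2} f''(\xi)(\tau - \tau^*)^2 \;=\; f(\tau^*) + \tfrac{1}{2} f''(\xi)(\tau - \tau^*)^2
\]
for some $\xi$ between $\tau$ and $\tau^*$. Rearranging and applying the two-sided bound on $f''$ then produces
\[
\tfrac{m}{2}(\tau^* - \tau)^2 \;\le\; f(\tau^*) - f(\tau) \;\le\; \tfrac{M}{2}(\tau^* - \tau)^2,
\]
which is exactly the desired quadratic sandwich on $[\tau^* - \delta,\, \tau^* + \delta]$.

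To extend the bound to all $\tau \in [0,1]$ I would run a compactness argument on the complement $R := [0,1] \setminus (\tau^* - \delta,\, \tau^* + \delta)$. Since $f$ is continuous on the compact set $R$ and $\tau^*$ is the unique maximizer (so $f(\tau^*) - f(\tau) > 0$ for all $\tau \in R$), the infimum $g := \min_{\tau \in R}\bigl(f(\tau^*) - f(\tau)\bigr)$ is attained and strictly positive. On $R$ we also have $\delta^2 \le (\tau^* - \tau)^2 \le 1$, so choosing
\[
C_1 \;=\; \min\bigl(\tfrac{m}{2},\, g\bigr) \qquad\text{and}\qquad C_2 \;=\; \max\bigl(\tfrac{M}{2},\, f(\tau^*)/\delta^2\bigr)
\]
patches the local and global regimes together and delivers $C_1(\tau^* - \tau)^2 \le f(\tau^*) - f(\tau) \le C_2(\tau^* - \tau)^2$ for all $\tau \neq \tau^*$ (the statement's strict inequalities hold on this range, and the $\tau = \tau^*$ case is trivial).

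The step I expect to require the most care is not a computation but an assumption bookkeeping step: the lemma rests on $f$ being $C^2$ near $\tau^*$, continuous on $[0,1]$, and having $\tau^*$ as a \emph{unique} maximizer. These are exactly the assumptions imported from \cite{kleinberg2003posted} and highlighted right before the lemma, so the argument above goes through without additional machinery; all that remains is to make sure each of the three ingredients is explicitly invoked at the step where it is used.
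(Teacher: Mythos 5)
Your proof is correct and follows essentially the same route as the paper's: a local quadratic (Taylor-type) bound in a neighborhood of $\tau^*$ using $f''(\tau^*)<0$, patched with a compactness argument on the complement where the gap $f(\tau^*)-f(\tau)$ is bounded away from zero and $(\tau^*-\tau)^2$ is bounded above and below. Your version is in fact more explicit than the paper's about how the constants on the complement are obtained and about the implicit $C^2$/unique-maximizer assumptions.
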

	\begin{proof}
		Since $f''(\tau^*)$ exists and has a strictly negative value, there exists a neighbor $N_\eps(\tau^*) = (\tau^* - \eps, \tau^* + \eps)$ such that for some constants $A_1,A_2$ we have $A_1(\tau^* - \tau)^2 < f(
	\tau^*) - f(\tau) < A_2(\tau^* - \tau)^2$ for $\tau \in N_\eps(\tau^*)$.
		Now consider the complement of $N_\eps(\tau)$ defined by $X = \{\tau \in [0,1]: |\tau^* - \tau| \ge \eps\}$.
		Due to the compactness of $X$ and since $f(\tau^*) - f(\tau)$ is strictly positive for all $x \in X$, there are constants $B_1,B_2$ such that $B_1(\tau^* - \tau)^2 < f(\tau^*) - f(\tau) < B_2(\tau^* - \tau)^2$ for $x \in X$.
		Then, $C_1 =\min(A_1,B_1)$ and $C_2 = \max(A_2, B_2)$ finishes the proof.
	\end{proof}

	\begin{lemma}\label{lm:discretization}
		The discretization error $f(\tau^*) - \max_i f(i/K) \le C_2/K^2$.
	\end{lemma}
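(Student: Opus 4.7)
The plan is to choose $i^{*}$ to be the grid point in $\{0, 1/K, 2/K, \ldots, 1\}$ closest to $\tau^{*}$, and then directly apply the quadratic upper bound from Lemma~\ref{lm:klein1} to obtain the claimed $O(1/K^{2})$ discretization error.

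First, I would observe that since $\tau^{*} \in (0,1)$ and the grid $\{i/K : 0 \le i \le K\}$ partitions $[0,1]$ into sub-intervals of length $1/K$, there exists some $i^{*} \in \{0,1,\ldots,K\}$ with
\[
\bigl|\tau^{*} - i^{*}/K\bigr| \le \tfrac{1}{2K} \le \tfrac{1}{K}.
\]
Concretely, one can take $i^{*} \in \{\lfloor K\tau^{*}\rfloor, \lceil K\tau^{*}\rceil\}$, whichever is closer to $K\tau^{*}$.

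Next, I would invoke the upper bound in Lemma~\ref{lm:klein1} with $\tau = i^{*}/K$:
\[
f(\tau^{*}) - f(i^{*}/K) \le C_{2}\bigl(\tau^{*} - i^{*}/K\bigr)^{2} \le C_{2}\cdot \tfrac{1}{K^{2}}.
\]
Since $\max_{i} f(i/K) \ge f(i^{*}/K)$, rearranging gives $f(\tau^{*}) - \max_{i} f(i/K) \le C_{2}/K^{2}$, as claimed.

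There is no real obstacle here: the previous lemma does all the work by providing a quadratic modulus of continuity at the maximizer, and the only input needed is that the uniform grid of spacing $1/K$ approximates every point of $[0,1]$ to within $1/(2K)$. In fact the constant can be sharpened to $C_{2}/(4K^{2})$, but the weaker bound $C_{2}/K^{2}$ is what the subsequent regret calculation will plug in when trading off discretization error against bandit regret to choose $Q = (T/\log T)^{1/4}$.
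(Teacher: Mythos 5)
Your proof is correct and follows essentially the same route as the paper's: pick the grid point nearest $\tau^*$ (within $1/K$) and apply the quadratic upper bound of Lemma~\ref{lm:klein1}. The only cosmetic difference is that the paper's arm set is $\{1/K,\ldots,K/K\}$ rather than including $0$, which weakens the nearest-point distance to $1/K$ instead of $1/(2K)$ but still yields the stated $C_2/K^2$ bound.
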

	\begin{proof}
		Note that there exists an element in the set of arms $\{1/K, \ldots, K/K\}$ satisfy $i/K - x^* \le 1/K$, denote this arm by $j$.
		Then we have
		\begin{align*}
			f(\tau^*) - \max_i f(i/K) &\le f(\tau^*) - f(j/K) \\ &\le
            C_2(\tau^* - j/K) \\ &\le C_2/K^2,
		\end{align*}
		where the second inequality follows from Lemma~\ref{lm:klein1}, and the last holds due to our construction of $j$.
	\end{proof}

	\begin{lemma}\label{lm:inversegapbound}
		Let $\Delta_i = \max_{1\le j \le K} f(j/K) - f(i/K) $.
		Then, The sum of inverse gaps satisfies $\sum_{\Delta_i > 0}1/\Delta_i \le 7K^2/C_1$.
	\end{lemma}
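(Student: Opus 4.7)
The plan is to lower bound each suboptimal gap $\Delta_i$ via the quadratic envelope from Lemma~\ref{lm:klein1}, then sum the inverse gaps over distance scales from $\tau^*$.

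First, I would pin down the best arm's location. Let $d_i := |i/K - \tau^*|$ and $i^* := \arg\max_j f(j/K)$. By Lemma~\ref{lm:discretization}, $f(\tau^*) - f(i^*/K) \le C_2/K^2$, and the lower bound $C_1 d_{i^*}^2 \le f(\tau^*) - f(i^*/K)$ from Lemma~\ref{lm:klein1} then forces $d_{i^*} \le \sqrt{C_2/C_1}/K$. Writing $R := C_2/C_1$, for any suboptimal arm I would decompose
$$\Delta_i = \bigl[f(\tau^*) - f(i/K)\bigr] - \bigl[f(\tau^*) - f(i^*/K)\bigr]$$
and apply the lower bound on the first bracket and the upper bound on the second from Lemma~\ref{lm:klein1}, obtaining $\Delta_i \ge C_1 d_i^2 - C_2 d_{i^*}^2 \ge C_1\bigl(d_i^2 - R^2/K^2\bigr)$.

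Next, I would split the suboptimal arms into a ``far'' group with $d_i \ge 2R/K$ and a ``near'' group with $d_i < 2R/K$. For the far group, the subtracted term is at most a quarter of $C_1 d_i^2$, so $\Delta_i \ge (3/4)\,C_1 d_i^2$, hence $1/\Delta_i \le 4/(3 C_1 d_i^2)$. Since at most two grid points lie in each shell $((k-1)/K, k/K]$, summing over far arms and invoking $\sum_{k \ge 1} 1/k^2 \le \pi^2/6$ gives a contribution of at most roughly $(8\pi^2/18)\cdot K^2/C_1 < 5K^2/C_1$.

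The main obstacle, and the delicate step, is controlling the near group: here the bound $C_1(d_i^2 - R^2/K^2)$ can be zero or negative, yet there are at most $O(R)$ such arms. I would handle these by a sharper case analysis based on whether the near arm lies on the same side of $\tau^*$ as $i^*$. In the same-side case, a direct telescoping with Lemma~\ref{lm:klein1} gives $\Delta_i \gtrsim C_1 (|i-i^*|/K)^2$, and the contribution telescopes like $\sum 1/k^2$. In the opposite-side case, I would tighten the bound on $d_{i^*}$ using the fact that $i^*$ must beat the closest grid point to $\tau^*$ (forcing $d_{i^*} \le \sqrt{R}/(2K)$), and then argue that for $\Delta_i$ to be positive and small, the arm must be nearly as close to $\tau^*$ as $i^*$, limiting both the number and the smallness of such arms. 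Aggregating the near-arm contribution of $O(K^2/C_1)$ with the far-arm contribution yields the desired $7K^2/C_1$ bound.
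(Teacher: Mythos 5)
Your ``far'' group is handled correctly and is essentially the paper's argument in a different costume: the paper simply orders the positive gaps, uses the pigeonhole observation that at most $j$ grid points lie within distance $j/(2K)$ of $\tau^*$ to conclude that the $j$-th smallest gap is at least $C_1(j/(2K))^2$, and sums $\frac{4K^2}{C_1}\sum_{j\ge 1} j^{-2}\le \frac{4K^2}{C_1}\cdot\frac{\pi^2}{6}\le \frac{7K^2}{C_1}$. The genuine gap in your proposal is the ``near'' group, and it cannot be closed along the lines you sketch. The only way Lemma~\ref{lm:klein1} lets you compare two grid points is through $\tau^*$, i.e.\ exactly the decomposition $\Delta_i\ge C_1 d_i^2-C_2 d_{i^*}^2$ you already wrote down; your ``telescoping'' in the same-side case is this same inequality, and it is negative for every arm with $d_i^2<(C_2/C_1)d_{i^*}^2$, i.e.\ for all $|i-i^*|$ up to roughly $C_2/C_1$. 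Worse, no refinement of this case analysis can succeed: if $\tau^*$ sits essentially at the midpoint between two adjacent grid points, the envelope $C_1(\tau-\tau^*)^2<f(\tau^*)-f(\tau)<C_2(\tau-\tau^*)^2$ is consistent with those two points having values that differ by an arbitrarily small positive amount (take $f$ quadratic plus a tiny cubic perturbation), so the gap of the second-best grid arm is positive but not bounded below by any function of $K$, $C_1$, $C_2$. Hence ``the arm must be nearly as close to $\tau^*$ as $i^*$'' limits the \emph{number} of near arms but gives no control on the \emph{smallness} of their gaps, which is what $1/\Delta_i$ requires.

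In fairness, you have put your finger on a subtlety the paper glosses over: the paper's proof opens by asserting $\Delta_i\ge C_1(\tau^*-i/K)^2$, which silently replaces the gap to the best \emph{grid} arm by the gap to $f(\tau^*)\ge\max_j f(j/K)$, and that inequality does not literally follow from Lemma~\ref{lm:klein1} for the $\Delta_i$ as defined in the statement. Under that reinterpretation (measure every gap against $f(\tau^*)$), your far/near split is unnecessary and the whole proof collapses to the three-line ordering-plus-Basel-sum computation above, with the constant $7$ coming from $4\pi^2/6<7$. If you insist on the literal $\Delta_i$, the statement itself needs to be weakened --- e.g.\ restrict the sum to arms whose gap exceeds a threshold of order $C_2/K^2$ and absorb the remaining arms into the $T\Delta_i$ side of the UCB regret bound --- which is the standard fix in discretized-bandit analyses but is not what either you or the paper does.
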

	\begin{proof}
		Due to Lemma~\ref{lm:klein1}, we have $\Delta_i \ge C_1(\tau^* - i/K)^2$ for $i \in [K]$.
		Define $0 =\tilde{\Delta}_0 \le \tilde{\Delta}_1 \le \ldots \le \tilde{\Delta}_{K-1}$ be an ordered sequence of the set $A = \{\Delta_i\}_{i \in [K]}$.
		Since at most $j$ elements of the set $A$ satisfy $i/K - x^* \le j/2K$, we have $\tilde{\Delta}_j \ge C_1(j/2K)^2$.
		Hence we obtain
		\begin{align*}
			\sum_{i: \Delta_i \neq 0}1/\Delta_i
			<
			\sum_{i=1}^K C_1^{-1}(i/2K)^{-2} 
			&< 
			\frac{4K^2}{C_1}\sum_{i=1}^K i^2\\
			&<
			\frac{4K^2}{C_1}\cdot \frac{\pi^2}{6}\\
			&\le
			\frac{7K^2}{C_1},
		\end{align*}
            and it completes the proof of the lemma.
	\end{proof}
    We can now use the following regret bounds for the UCB1 algorithm to complete the proof.
    \begin{lemma}\label{lm:banditth}
        \cite{bandit2002}
        In a multi-armed bandit instance with $K>1$ arms, the UCB1 algorithm achieves an expected regret of at most 
        \begin{align*}
            O([\sum\limits_{\Delta_i>0}\frac{\log{T}}{\Delta_i}]+\sum\limits_{i=1}^{K}\Delta_i)
        \end{align*}
        after $T$ rounds, where $\Delta_i=\max_{1\le j \le K}f(j/K)-f(i/K)$ is the difference between the expected utility of arm $i$ and the optimal arm.
    \end{lemma}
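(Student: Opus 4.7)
The plan is to bound, for each suboptimal arm $i$ (those with $\Delta_i > 0$), the expected number of pulls $\Ex{T_i(T)}$, then combine via the standard regret decomposition $\reg(T) = \sum_{i=1}^{K} \Delta_i \Ex{T_i(T)}$. The target bound is $\Ex{T_i(T)} \le 8 \ln T/\Delta_i^2 + O(1)$ per suboptimal arm, which upon summing yields $O\left(\sum_{\Delta_i > 0} \log T/\Delta_i + \sum_i \Delta_i\right)$ as stated.

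Write $\mu_i := f(i/K)$, let $i^*$ denote an optimal arm, and recall that UCB1 pulls the arm maximizing $\bar{X}_{i,T_i(t-1)} + c_{t,T_i(t-1)}$ where $\bar{X}_{i,s}$ is the empirical mean after $s$ pulls and $c_{t,s} := \sqrt{2\ln t / s}$ is the confidence radius. Fix a suboptimal arm $i$ and set $\ell := \lceil 8 \ln T / \Delta_i^2 \rceil$, calibrated precisely so that $c_{t,s} \le \Delta_i/2$ whenever $s \ge \ell$ and $t \le T$. The key structural claim is: if UCB1 pulls arm $i$ at round $t$ with $T_i(t-1) \ge \ell$, then at least one of the ``bad'' events must hold, namely either (a) $\bar{X}_{i^*, T_{i^*}(t-1)} \le \mu_{i^*} - c_{t,T_{i^*}(t-1)}$ or (b) $\bar{X}_{i, T_i(t-1)} \ge \mu_i + c_{t,T_i(t-1)}$. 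Indeed, if neither holds, the index of $i^*$ strictly exceeds $\mu_{i^*}$, while the index of $i$ is at most $\mu_i + 2 c_{t,T_i(t-1)} \le \mu_i + \Delta_i = \mu_{i^*}$, contradicting UCB1's selection.

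Next, I would bound the probabilities of (a) and (b) by Hoeffding's inequality. Since rewards lie in $[0,1]$, for any fixed count $s$, $\Pr[\bar{X}_{i,s} \ge \mu_i + c_{t,s}] \le \exp(-2 s c_{t,s}^2) = t^{-4}$, and symmetrically for event (a). Because $T_i(t-1)$ and $T_{i^*}(t-1)$ are random counts in $\{1, \ldots, t\}$, a union bound over their possible values adds a factor of $t$, so the probability that any bad event occurs at round $t$ is at most $2 t \cdot t^{-4} = 2 t^{-3}$. Summing, $\Ex{T_i(T)} \le \ell + \sum_{t=1}^{\infty} 2 t^{-3} \le 8 \ln T / \Delta_i^2 + 1 + \pi^2/3$. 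Multiplying by $\Delta_i$ and summing over all suboptimal arms gives the lemma, contributing $O\!\left( \sum_{\Delta_i > 0} \log T / \Delta_i \right)$ from the leading term and $O(\sum_i \Delta_i)$ from the additive constant.

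The main obstacle is the case-analysis step: choosing $\ell \asymp \log T / \Delta_i^2$ so that the confidence radius is tight enough to force a contradiction unless a large-deviation event occurs. A minor but important technicality is that the Hoeffding bound has to be applied uniformly in the random pull count $T_i(t-1)$, which is why a union bound over $s \in \{1, \ldots, t\}$ is needed; the resulting loss of a factor of $t$ is absorbed by the summability of $t^{-3}$. Everything else, including the regret decomposition and the conversion from pull counts to regret, is mechanical.
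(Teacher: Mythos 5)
Your proof is correct: it is the standard Auer--Cesa-Bianchi--Fischer argument (bound $\Ex{T_i(T)}$ by $8\ln T/\Delta_i^2 + O(1)$ via the confidence-radius contradiction plus Hoeffding and a union bound over pull counts, then apply the decomposition $\reg(T)=\sum_i \Delta_i\Ex{T_i(T)}$), and the arithmetic checks out, with $\pi^2/3$ being a valid (slightly loose) bound on $\sum_t 2t^{-3}$. The paper itself gives no proof of this lemma --- it is stated as a black-box citation to the UCB1 literature --- so your writeup simply reconstructs exactly the argument the citation points to.
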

    \begin{lemma}\label{lm:discretereg}
    Let $DiscreteReg$ be the regret of the UCB1 algorithm compared to the utility of always using $\tau = [arg\max_{1\le i \le K}{f(i/K)]/K}$ as the threshold. Then $DiscreteReg = O(\frac{K^2}{C_1}\log{T}+K)$ 
    \end{lemma}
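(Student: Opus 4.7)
The plan is to apply the standard UCB1 regret bound from Lemma~\ref{lm:banditth} to the discretized $K$-armed bandit instance, then plug in the sum-of-inverse-gaps estimate from Lemma~\ref{lm:inversegapbound}, and finally handle the $\sum_i \Delta_i$ term with a trivial bound using boundedness of the utilities.

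First, I would instantiate Lemma~\ref{lm:banditth} on the bandit induced by the $K$ thresholds $\tau_i = i/K$ with mean reward $f(i/K)$ for arm $i$. Since the myopic agent best-responds, each pull of arm $i$ yields a sample of the principal's utility with mean $f(i/K)$, which fits the stochastic bandit model. The lemma then guarantees a cumulative regret against the discrete optimum $\tau^\star = [\arg\max_{1 \le i \le K} f(i/K)]/K$ of order $\sum_{\Delta_i > 0} (\log T)/\Delta_i + \sum_{i=1}^K \Delta_i$, where $\Delta_i = \max_j f(j/K) - f(i/K)$; this is exactly what $\mathrm{DiscreteReg}$ measures.

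Next I would bound each of the two summations. For the first, I can factor $\log T$ out and apply Lemma~\ref{lm:inversegapbound} directly to obtain $\sum_{\Delta_i > 0}(\log T)/\Delta_i \le (7K^2/C_1)\log T = O((K^2/C_1)\log T)$. For the second, since the principal's utility has support in $[0,1]$, every gap satisfies $\Delta_i \le 1$, and therefore $\sum_{i=1}^K \Delta_i \le K$. Adding the two pieces yields $\mathrm{DiscreteReg} = O((K^2/C_1)\log T + K)$, matching the stated bound.

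There is no serious obstacle here; the main thing to be careful about is making explicit that UCB1 is being applied to a genuinely stochastic instance (which is justified because the myopic agent's best response depends only on the announced threshold and the current i.i.d.\ realization of $(X_a^{(t)}, Y_a^{(t)})$, so the reward of each arm is i.i.d.\ across rounds). Once that observation is in place, the proof reduces to a direct substitution of Lemma~\ref{lm:inversegapbound} into Lemma~\ref{lm:banditth}.
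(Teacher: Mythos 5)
Your proof is correct and follows essentially the same route as the paper's: apply the UCB1 bound of Lemma~\ref{lm:banditth}, substitute the inverse-gap sum from Lemma~\ref{lm:inversegapbound}, and bound $\sum_i \Delta_i \le K$ using $\Delta_i \le 1$. The extra remark justifying why the induced instance is genuinely stochastic is a nice addition but does not change the argument.
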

    \begin{proof}
    Using Lemma \ref{lm:banditth}, we have
    \begin{align*}
        DiscreteReg &= O([\sum\limits_{\Delta_i>0}\frac{\log{T}}{\Delta_i}]+\sum\limits_{i=1}^{K}\Delta_i) \tag{Lemma\ref{lm:banditth}} \\
        &=O(\frac{K^2}{C_1}\log{T}+\sum\limits_{i=1}^{K}\Delta_i) \tag{Lemma \ref{lm:inversegapbound}}\\
        &=O(\frac{K^2}{C_1}\log{T}+K) \tag{$\Delta_i \le 1$}.
    \end{align*}
    \end{proof}
    The regret of UCB1 compared to the optimal benchmark, which achieves a utility of $T\max{\tau}f(\tau)$, can be calculated by summing up $DiscreteReg$ and the discretization error. Therefore, the total regret of the algorithm is
    \begin{align}
        REG &= DiscreteReg + DiscretizationError\\
            &= O(\frac{K^2}{C_1}\log{T}+K + \frac{T}{K^2}), \label{ineq:05170137}
    \end{align}
    where we use Lemma \ref{lm:discretization} and Lemma \ref{lm:discretereg}.
    Plugging in $K={(\frac{T}{\log{T}})}^{\frac{1}{4}}$, we get the desired bound on the regret
    \begin{align*}
        REG &= O(\frac{K^2}{C_1}\log{T}+K + \frac{T}{K^2})\\
        &= O(\frac{\sqrt{T\log{T}}}{C_1} + {(\frac{T}{\log{T}})}^{\frac{1}{4}} + \sqrt{T\log{T}})\\
        &=O(\sqrt{T\log{T}}).
    \end{align*}
\end{proof}

\section{Proof of Theorem~\ref{thm:stoc_stra}}\label{sec:apd:stoc}

\begin{algorithm}
    \SetAlgoLined
    $K$ arms, delay $D$, perturbation $\delta$, $\set{A} \gets [K]$\\
    \While{$t < D$}
    {
    	\For{$i \in \set{A}$}
    	{
    		Pull arm $i$ and observe feedback (principal's utility) $r_t$;\\
    		$t \gets t+1$;\\
    	}    	
        \For{$i \in \set{A}$}
        	{
        		
        		\leIf{$t - D \ge 1$}
        		{
        			\\
        			$n \gets \max(\sum_{\tau =1}^{t-D}\Ind{i_\tau = 1},1)$;\\
        			$\hat{\mu}_i \gets \frac{1}{n}\sum_{\tau=1}^{t-D}\Ind{i_\tau = i}r_\tau$;\\
        			$\text{LCB}_i \gets \hat{\mu}_i - \sqrt{2 \log(T)/n} - \delta$; \\
        			$\text{UCB}_i \gets \hat{\mu}_i + \sqrt{2 \log(T)/n} + \delta$; \\
        		}
        		{continue}
        	}
        	$\set{A} \gets \{i\in \set{A}: \text{UCB}_i \ge \text{LCB}_j, \forall j \in \set{A}\}$.
    }

\caption{\textsc{SuccessiveEliminationDelayed} [\cite{haghtalab2022learning}]}\label{alg:4}
\end{algorithm}
Before we proceed, we present the formal definition of the perturbed bandit problem.
\begin{definition}[Bandits with perturbed rewards]
    Let $\cI$ be an instance of the multi-armed bandit problem.
    We say that the ex-post sequence of rewards $r_1,r_2,\ldots,r_T$ are $\delta$-perturbed from the $\cI$ if for any arm sequence $a_1,\ldots, a_T$ that happens with positive probability in $\cI$, there exist independent random intervals $\{[l_t, r_t]\}_{t \in [T]}$ such that $r_t \in [l_t,u_t]$ almost surely and $\mu_{a_t} - \delta \le \Ex{l_t} \le \Ex{u_t} \le \mu_{a_t} + \delta$.
\end{definition}

\begin{proof}
        We mainly analyze the regret upper bound of Algorithm~\ref{alg:4}.
        We first present the following lemma, which characterizes the regret upper bound of Algorithm~\ref{alg:4} for a bandit problem with delayed feedback and perturbed rewards.
        \begin{lemma}[\cite{haghtalab2022learning}]\label{lm:perturbed}
	   For a multi-armed bandit problem with $K$ arms, $\delta$-perturbed rewards, and $D$-delayed feedback, Algorithm~\ref{alg:4} has regret upper bound of $O(\sum_{\Delta_i >0} \frac{\log T}{\Delta_i} +\delta T + D \log K)$, where $\Delta_i$ denotes the difference of the mean rewards between the optimal arm and arm $i$.
        \end{lemma}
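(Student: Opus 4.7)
The plan is to adapt the textbook successive-elimination regret analysis to absorb two new features: the $\delta$-perturbation in the rewards and the $D$-step delay in the feedback. The confidence bands $[\text{LCB}_i, \text{UCB}_i]$ in Algorithm~\ref{alg:4} already encode both adjustments --- they are widened by $\pm \delta$ beyond the usual Hoeffding radius $\sqrt{2 \log T / n}$, and they are computed only from samples received by time $t - D$. The task is therefore to verify that these widened, delayed bands still support the two standard pillars of the argument: (a) the optimal arm survives, and (b) each suboptimal arm is eliminated within a number of pulls that scales gracefully with its gap.

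For (a) I would first set up a high-probability good event $\mathcal{E}$. The definition of $\delta$-perturbed rewards supplies random intervals $[l_\tau, u_\tau] \ni r_\tau$ with $\mu_{a_\tau} - \delta \le \Ex{l_\tau} \le \Ex{u_\tau} \le \mu_{a_\tau} + \delta$. Applying a Hoeffding--Azuma bound to the bounded martingale $\sum_{\tau \le t-D} (r_\tau - \Ex{r_\tau}) \Ind{a_\tau = i}$ gives $|\hat{\mu}_i - \Ex{\hat{\mu}_i}| \le \sqrt{2 \log T / n}$ with probability $1 - 2 T^{-c}$, while the perturbation assumption supplies $|\Ex{\hat{\mu}_i} - \mu_i| \le \delta$ deterministically. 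A union bound over arms and time-steps then yields $\mathcal{E}$, on which $\mu_i \in [\text{LCB}_i, \text{UCB}_i]$ uniformly; in particular the optimal arm $i^*$ is never eliminated, since $\text{UCB}_{i^*} \ge \mu_{i^*} \ge \mu_j \ge \text{LCB}_j$ for every competitor $j$.

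For (b), if a suboptimal arm $i$ is still active after $n$ of its pulls have fed back, then $\text{UCB}_i \ge \text{LCB}_{i^*}$, which on $\mathcal{E}$ unfolds to $\Delta_i \le 4 \sqrt{2 \log T / n} + 4 \delta$. Whenever $\Delta_i > 8 \delta$ this pins $n = O(\log T / \Delta_i^2)$; after accounting for the additional pulls that occur during the $D$-round window between the moment this statistical bound is reached and the moment the elimination rule actually fires, arm $i$ is pulled at most $O(\log T / \Delta_i^2) + D_i$ times for some delay overhead $D_i$, contributing $O(\log T / \Delta_i) + \Delta_i D_i$ regret. Arms with $\Delta_i \le 8 \delta$ may stay active indefinitely, but each pull of them costs at most $8 \delta$, so their aggregate contribution over the at most $T$ total pulls is bounded by $8 \delta T$.

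The main technical obstacle is the delay term: naively charging $\sum_i \Delta_i D_i \le K \cdot D$ would produce a $D K$ rather than the claimed $D \log K$. To sharpen this I would follow the dyadic bucketing used by \cite{lancewicki2021stochastic}: group suboptimal arms by $\Delta_i \in [2^{-(k+1)}, 2^{-k})$, observe that within any single bucket the delay wastes only $O(D)$ pulls in total (all arms in a bucket reach the elimination threshold within a constant factor of each other, so a single $D$-window suffices for the whole bucket), and note that at most $O(\log K)$ buckets are non-trivially represented up to the gap resolution $1/K$ implied by the concentration. Summing the three contributions --- $\sum_{\Delta_i > 0} \log T / \Delta_i$ from the large-gap arms in (b), $\delta T$ from the small-gap arms, and $O(D \log K)$ from the phased delay accounting --- and adding $O(1)$ from the low-probability failure of $\mathcal{E}$, recovers the claimed regret bound.
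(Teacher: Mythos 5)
The paper does not actually prove this lemma: it is imported verbatim from \cite{haghtalab2022learning} (who in turn adapt the delayed successive-elimination analysis of \cite{lancewicki2021stochastic}), so there is no in-paper proof to compare against. Your reconstruction follows exactly the argument those works use --- widened confidence bands absorbing the $\pm\delta$ bias, a good event under which the optimal arm survives, the $\Delta_i \le 4\sqrt{2\log T/n}+4\delta$ elimination criterion splitting arms into large-gap arms (contributing $\sum \log T/\Delta_i$) and small-gap arms (contributing $\delta T$), and a separate accounting for the delay --- and is sound as a sketch.

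Two steps deserve more care before this would count as a complete proof. First, in the good-event construction you apply Hoeffding--Azuma directly to the perturbed rewards $r_\tau$, but the perturbed-bandit definition only guarantees that the \emph{interval endpoints} $l_\tau,u_\tau$ are independent with expectations within $\delta$ of $\mu_{a_\tau}$; the rewards themselves may be chosen adversarially inside those intervals and need not concentrate around an unconditional mean. The standard fix is to run the concentration argument on the independent endpoints and sandwich $\hat\mu_i$ between the empirical averages of the $l$'s and $u$'s, which is what delivers $\mu_i\in[\mathrm{LCB}_i,\mathrm{UCB}_i]$. Second, your route to $D\log K$ via dyadic gap buckets is shakier than you suggest: the number of nonempty buckets is governed by the range of the gaps (down to order $\delta$ or $\sqrt{\log T/T}$), not by a resolution of $1/K$, so the claim of ``$O(\log K)$ buckets'' does not follow as stated. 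The clean derivation is a harmonic sum over the shrinking active set: during the $D$-step lag before arm $i$'s elimination takes effect it receives only about $D/|\mathcal{A}|$ extra pulls (round-robin over the active set $\mathcal{A}$), each costing at most $\Delta_i\le 1$, and summing over arms ordered by elimination time gives $\sum_{j=1}^{K} D/j = O(D\log K)$. With those two repairs your outline matches the cited proof.
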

        
	To apply Lemma~\ref{lm:perturbed}, we essentially need to construct the proper random interval to guarantee some bounded perturbation of our reduced bandit problem.
	We write $E_\tau$ to denote the eligible set $\{a: X_a \ge \tau\}$.
        Our construction of perturbation proceeds as follows.
	\begin{lemma}
		Fix an ex-post realization of all the arms.
		Let $x_{\tau} = X_{\br(E_\tau)}\Ind{X_{\br(E_\tau)} \ge \tau}$ and define $l = (x_\tau - L_1^{-1}\eps)$ and $u = (x_\tau + L_1^{-1}\eps)$.
		If $a \in \br_\eps(E_\tau)$, then $l \le X_{a}\Ind{X_a \ge \tau} \le u$.
		In addition, we have $\Ex{x_\tau} - L_1^{-1}\eps \le \Ex{l} \le \Ex{u} \le \Ex{x_\tau} + L_1^{-1}\eps$.
	\end{lemma}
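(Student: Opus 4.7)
The pointwise bound boils down to two ingredients: an $\eps$-best response has agent utility within $\eps$ of the exact best response, and the stochastic Lipschitz assumption then translates this closeness in $Y$ into closeness in $X$. Let $b = \br(E_\tau)$ denote the exact best response, so by definition $x_\tau = X_b\Ind{X_b \ge \tau}$, and let $a \in \br_\eps(E_\tau)$ be the given $\eps$-best response. I would first show $|Y_a - Y_b|\le \eps$: the fact that $b$ is the exact best yields $Y_b \ge Y_a$, while the $\eps$-best response property applied with $a' = b \in A$ gives $Y_a \ge Y_b - \eps$. Then the stochastic Lipschitz assumption gives $L_1 |X_a - X_b| \le |Y_a - Y_b| \le \eps$, i.e., $|X_a - X_b| \le L_1^{-1}\eps$.

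From here, I would handle the indicators by a brief case split. The main case is when $b$ and $a$ are both non-null and eligible; then $X_b,X_a \ge \tau$ so both indicators evaluate to $1$ and the inequality $|X_a - X_b| \le L_1^{-1}\eps$ directly rearranges to $l \le X_a\Ind{X_a\ge \tau}\le u$. For the null cases I would invoke the tie-breaking convention that the agent plays in favor of the principal and the Lipschitz assumption: if $b = \perp$, then $Y_{a'} = Y_\perp = 0$ for every $a' \in A$, so by Lipschitz all non-null $X_{a'}$ agree on a common value, and the tie-breaking rule forces this value to be $<\tau$ (else $b$ would not be $\perp$), which means no non-null solution is eligible and necessarily $a = \perp$ too, collapsing both sides of the desired inequality to $0$. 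The symmetric sub-case where $b$ is non-null but $a=\perp$ is handled by using $a=\perp \in \br_\eps$ to conclude $Y_b\le \eps$ and then bounding $X_b$ via Lipschitz extended to include the null outcome as $(X_\perp,Y_\perp) = (0,0)$.

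The expectation statement is immediate once the pointwise bound is established: since $l$ and $u$ are deterministic translates of $x_\tau$, linearity of expectation gives $\Ex{l} = \Ex{x_\tau} - L_1^{-1}\eps$ and $\Ex{u} = \Ex{x_\tau} + L_1^{-1}\eps$, so both inequalities in the second claim are actually equalities. The main obstacle I anticipate is cleanly handling the null-outcome edge cases, because the stochastic Lipschitz assumption as stated only holds between non-null pairs; the cleanest route is to argue that under the tie-breaking convention these mixed cases either reduce to the all-null case or can be dispatched by extending Lipschitz to $\perp$ via the natural convention $(X_\perp,Y_\perp) = (0,0)$.
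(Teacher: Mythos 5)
Your main case and your treatment of the expectation claim coincide with the paper's proof: for an eligible $\eps$-best response $a$, the chain $L_1\,|X_a - X_{\br(E_\tau)}| \le |Y_a - Y_{\br(E_\tau)}| \le \eps$, together with the observation that eligibility of $a$ forces eligibility of the exact best response (so both indicators equal $1$), gives the pointwise bound; and the expectation inequalities are immediate because $l$ and $u$ are deterministic shifts of $x_\tau$.

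The genuine gap is in your handling of the null outcome. The stochastic Lipschitz assumption is stated only for $a,b \in A\setminus\{a_0\}$, so your proposed extension to $(X_\perp,Y_\perp)=(0,0)$ is an extra assumption rather than something you are entitled to, and without it your sub-case ``$b$ eligible but $a=\perp$'' is left open. The paper closes exactly this hole using the \emph{other} standing assumption of Theorem~\ref{thm:stoc_stra}, which you never invoke: every realized solution has agent utility strictly greater than $y_{min}$, and the lemma is applied with $\eps \le y_{min}$. Consequently, if the $\eps$-best response $a$ is not eligible (hence $a=\perp$ with eventual utility $0$), then no eligible arm can exist at all --- otherwise proposing it would yield utility exceeding $y_{min}\ge\eps$, contradicting $\eps$-best-responseness --- so $X_a\Ind{X_a\ge\tau}$ and $x_\tau$ are both $0$. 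This one observation dispatches both of your null sub-cases simultaneously and makes the Lipschitz extension unnecessary. A related slip: in your $b=\perp$ sub-case you conflate the raw utilities $Y_{a'}$ with the eventual utilities $y_{a'}$ appearing in the best-response definition; $b=\perp$ forces $y_{a'}=0$ for all $a'$ (i.e., nothing is eligible, given the $y_{min}$ bound), not $Y_{a'}=0$, so the deduction that all non-null $X_{a'}$ coincide is unfounded, even if the conclusion you want there happens to be correct.
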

	\begin{proof}
		The latter condition directly follows from the definition of $l$ and $u$. 
		To check the former inequality, we use our assumptions on $y_{min}$ and Lipschitz continuity.
  
        Let $a \in \br_\eps(E_\tau)$ be an $\eps$-best response arm. If arm $a$ is not in the eligible set, then the agent must not have any eligible arms. Otherwise, proposing that arm would achieve a utility of at least $y_{min}>\eps$, which contradicts $a$ being an $\eps$-best response. Therefore in this case, $X_a\Ind{X_a \ge \tau}$ and $x_\tau$ are both equal to $0$ and so the first condition stands.

        Now, we can assume that $a$ is an eligible arm. We have
		\begin{align*}
			|X_a - X_{\br(E_\tau)}| &\le L_1^{-1}|Y_a - Y_{\br(E_\tau)}|\\
            &\le \eps L_1^{-1},
		\end{align*}
            where the first inequality follows from Lipschitz continuity, and the second holds by $\eps$-best-responseness of the action $a$.
		Thus, we obtain
		\begin{align*}
		 	X_{\br(E_\tau)} - \eps/L_1
		 	\le 
		 	X_a
		 	\le
		 	X_{\br(E_\tau)} + \eps/L_1.
		\end{align*}
		If arm $a$ is eligible, then it implies that the best response is also eligible.
        Therefore, we have
            \begin{align*}
			X_a\Ind{X_a \ge \tau} 
			&= 
		  X_a\\  
			&\le
			X_{\br(E_\tau)} + \eps/L_1 \\
           &= (X_{\br(E_\tau)}\Ind{{\br(E_\tau)} \ge \tau} + \eps/L_1)\\
           &= u,
		\end{align*}
            where the first equation follows from $a$'s eligibility and second equation follows since $\br(E_\tau)$ is eligible.
		Further we similarly obtain $X_a \ge (X_{\br(E_\tau)}\Ind{{\br(E_\tau)} \ge \tau} - \eps/L_1) = l$.
  
		Now observe that
		\begin{align*}
			\Ex{l}
			=
			\Ex{(x_\tau - L_1^{-1}\eps)}
			&=
			\Ex{x_\tau}	 - L_1^{-1}\eps.
		\end{align*}
		Similarly,
		\begin{align*}
			\Ex{u}
            =
			\Ex{(x_\tau + L_1^{-1}\eps)}
			&=
			\Ex{x_\tau}	 + L_1^{-1}\eps.
		\end{align*}
		and it completes the proof.
	\end{proof}
	
	Together with Lemma~\ref{lm:perturbed}, our bandit problem is $L_1^{-1}\eps$-perturbed from the stochastic setting with a myopic agent, assuming that the agent is $\eps$-best responding.
	Hence, similar to \eqref{ineq:05170137} in the proof of Theorem~\ref{thm:stoc_myop}, running Algorithm~\ref{alg:4} with $\delta = L_1^{-1}\eps$ yields
	\begin{align*}
		\reg_T
		=
		O(T\log T \cdot \frac{7K^2}{C_1} +L_1^{-1}\eps T + D \log K + \frac{TC_2}{K^2}).
	\end{align*}
	Plugging in $\eps = \min(L_1/T,y_{min})$, $D = T_\gamma \log (T_\gamma / \eps)$, and $K = (T/\log T)^{1/4}$, we obtain
	\begin{align*}
		\reg_T
		=
		O\Big(&(C_1+C_2^{-1})(T\log T)^{1/2} \\ &+
        T_\gamma \log(T_\gamma \max(T/L_1,1/y_{min})\log{T}\Big),
	\end{align*}
	and it completes the proof.
\end{proof}

\section{Necessity of Discount Factor}\label{sec:discount}
We note that with modest assumptions about the difference in the agent’s utility for different solutions, we can show that it is not possible for the principal to achieve a sublinear regret.  To show the inevitability of linear regret, we can consider two simple cases with a strategic agent with deterministic utility. In the first case, the agent only has a single solution with utility $1/2$  for the principal and utility $1$ for the agent. To achieve sublinear regret compared to the optimal, the principal needs to accept this solution in all but a sublinear number of rounds. For a large enough $T$, we can assume that this solution needs to be accepted in at least $2T/3$ rounds.

In the other case, the agent again has the same solution with utilities $1$ and $1/2$, in addition to a solution with a utility of $1$ for the principal and utility $\epsilon$ for the agent. For the principal to achieve a sublinear regret, she needs to get the solution with utility $1$ in all but a sublinear number of rounds.
So, we can assume that for large enough $T$, any mechanism achieving sublinear regret gets the agent to submit this solution in at least $2T/3$ rounds. This results in a utility of at most $2T/3 \cdot \eps + T/3$ for the agent. However, if the agent pretends to only have the first solution and acts the same as the agent in the other case, he can achieve a utility of at least $2T/3$, since the principal wants to have sublinear regret in the previous case. For $\eps < 1/2$, this results in a better utility for the agent. Therefore, it is impossible for the principal to achieve a sublinear regret for both of these cases for large $T$.

\end{document}